\newcommand{\tw}{\mathsf{tw}}
\newcommand{\br}[1]{\left(#1\right)}
\newcommand{\eps}{\varepsilon}
\newcommand{\Oh}{\mathcal{O}}
\newcommand{\poly}{\ensuremath{\mathsf{poly}}}
\newcommand{\diam}{\ensuremath{\mathsf{diam}}}
\newcommand{\inner}{\ensuremath{\mathsf{inner}{\text -}\mathsf{dist}}}
\newcommand{\loc}{\ensuremath{\mathsf{loc}}}
\newcommand{\sub}{\subseteq}
\newcommand{\sm}{\setminus}
\newcommand{\rr}{\ensuremath{\mathbb{R}}}
\newcommand{\nn}{\ensuremath{\mathbb{N}}}
\newcommand{\fcal}{\ensuremath{\mathcal{F}}}
\newcommand{\scal}{\ensuremath{\mathcal{S}}}
\newcommand{\parent}{\ensuremath{\mathsf{parent}}}
\newcommand{\ecc}{\ensuremath{\mathsf{ecc}}}
\newcommand{\ccal}{\mathcal{C}}
\newcommand{\layer}{\mathsf{layer}}
\newcommand{\apxdist}{\ensuremath{\mathsf{apx}{\text -}\mathsf{dist}}}
\def\DEBUG{true}
\ifdefined\DEBUG{}
\def\rem#1{{\marginpar{\raggedright\scriptsize #1}}}
\newcommand{\micr}[1]{\rem{\textcolor{blue}{\(\bullet \) #1}}}
\newcommand{\verr}[1]{\rem{\textcolor{purple}{\(\bullet \) #1}}}
\newcommand{\karr}[1]{\rem{\textcolor{orange}{\(\bullet \) #1}}}
\newcommand{\micr}[1]{ }
\newcommand{\verr}[1]{ }
\newcommand{\karr}[1]{ }
\title{Going Beyond Surfaces in Diameter Approximation}
\author{Michał Włodarczyk}{University of Warsaw, Poland}{michal.wloda@gmail.com}{https://orcid.org/0000-0003-0968-8414}{}
\authorrunning{M. Włodarczyk} 
\keywords{diameter, approximation, distance oracles, graph minors, treewidth} 
\begin{document}

\maketitle

\begin{abstract}
Calculating the diameter of an undirected graph requires quadratic running time under the Strong Exponential Time Hypothesis and this barrier works even against any approximation better than~$3/2$.
For planar graphs with positive edge weights, there are known $(1+\eps)$-approximation algorithms with running time $\poly(1/\eps,\, \log n)\cdot n$.
However, these algorithms rely on shortest path separators 
and this technique falls short to yield efficient algorithms beyond graphs of bounded genus.

In this work we depart from embedding-based arguments and obtain diameter approximations relying on VC set systems and the local treewidth property.
We present two orthogonal extensions of the planar case by giving $(1+\eps)$-approximation algorithms with the following running times:
\begin{itemize}
    \item $\Oh_h((1/\eps)^{\Oh(h)}\cdot n\log^2 n)$-time algorithm for graphs excluding an apex graph of size $h$ as a~minor,
    \item $\Oh_d((1/\eps)^{\Oh(d)}\cdot n\log^2 n)$-time algorithm for the class of $d$-apex graphs.
\end{itemize}

As a stepping stone, we obtain efficient $(1+\eps)$-approximate distance oracles for graphs excluding an apex graph of size $h$ as a minor.
Our oracle has preprocessing time $\Oh_h((1/\eps)^8\cdot n\log n\log W)$ and query time $\Oh_h((1/\eps)^2\cdot \log n\log W)$, where $W$ is the metric stretch.
Such oracles have been so far only known for bounded genus graphs.
All our algorithms are deterministic.
\end{abstract}

\section{Introduction}

The diameter of a graph $G$, denoted $\diam(G)$, is the largest distance between a pair of its vertices in the shortest path metric.
We assume that each edge is assigned some real positive weight, and the length of a path is the sum of its weights.
The obvious way to calculate $\diam(G)$ is to check the distances for all vertex pairs, but this inevitably requires quadratic time. 
This trivial approach is essentially optimal because already unweighted graphs of diameter 2 or 3 cannot be distinguished in time $\Oh(m^{1.99})$ under the Strong Exponential Time Hypothesis~\cite{Roditty13W} (see~\cite{AbboudWW16} for hardness in sparse graphs).
On the other hand, truly subquadratic algorithms to compute diameter are available in planar graphs~\cite{Cabello19, GawrychowskiKMS21} and graph classes excluding a minor~\cite{ducoffe2022diameter, kluk2025faster, LeW24}. 
Still, the best algorithm for 
planar graphs runs in time\footnote{
The variable $n$ refers to the number of vertices in a graph but all the considered graph classes are sparse so the number of edges is $\Oh(n)$.
The $\tilde\Oh$-notation hides factors polylogarithmic in $n$. We refer to the dependence $\tilde\Oh(n)$ as {\em almost-linear}.} $\tilde \Oh(n^{5/3})$~\cite{GawrychowskiKMS21}, which is far from linear.

The absence of fast algorithms calculating the diameter exactly led the researchers to seek fast approximation algorithms. 
Weimann and Yuster~\cite{weimann2015approximating} gave the first almost-linear $(1+\eps)$-approximation for planar diameter, running in time $\tilde\Oh(2^{\Oh(1/\eps)} \cdot n)$.
This was improved to randomized time $\tilde\Oh((1/\eps)^{5} \cdot n)$ by Chan and Skrepetos~\cite{ChanS17} (see also~\cite{ChangKT22}).
Li and Parter~\cite{li2019planar} gave a deterministic algorithm of this type, as well as a distributed algorithm with a small number of rounds.
Observe that each of these can be used to recognize unweighted planar graphs of diameter $D$, simply by setting $\eps = 1/(2D)$.
There are also specialized algorithms for this task by Eppstein~\cite{Eppstein00}, with running time $2^{\Oh(D \log D)}\cdot n$, and by Abboud, Mozes, and Weimann~\cite{abboud2023what}, with running time $\tilde\Oh(D^2\cdot n)$.

\subparagraph{Shortest path separators.}
An area that is technique-wise related is the design of approximate distance oracles.
After a preprocessing step, we would like to quickly estimate any $(u,v)$-distance within factor $(1+\eps)$.
Thorup~\cite{Thorup04} exploited the fact that planar graphs admit balanced separators consisting of two shortest paths to give an oracle with preprocessing time $\Oh((1/\eps)^2\cdot n\log^3 n)$, space $\Oh((1/\eps)\cdot n\log n)$, and query time  $\Oh(1/\eps)$.
The shortest path separators constitute a crucial ingredient of all the three aforementioned algorithms for estimating planar diameter as well.  

Abraham and Gavoille~\cite{abraham2006object} presented a generalization of shortest path separators to graph classes excluding a minor and extended Thorup's oracle to such classes.
However, their separators no longer enjoy such a nice structure.
The main caveat is that already in a single balanced separator one might need to use paths $P_1,P_2$ such that $P_2$ is a shortest path only in $G-P_1$.
Hence $P_2$ may be much longer than $\diam(G)$, which is easily seen to be unavoidable already in the class of apex graphs\footnote{$G$ is called an {\em apex graph} if $G$ can be made planar by removal of a single vertex.}, and it is unclear how to exploit such a separator for diameter computation.
The second issue is that finding the separator requires highly superlinear time, which burdens the preprocessing time of the oracle.
Almost-linear preprocessing time is currently available only for graphs of bounded genus~\cite{kawarabayashi2011linear} and for unit disk graphs~\cite{ChanSkrepetos2019}.
{The lack of well-structured balanced separators beyond surface-embeddable graphs seems to have limited the development of efficient metric-related algorithms so far.}

\subparagraph{Beyond surface embedding.}
There are two directions of generalizing the topological graph classes.
First, one can pick $\Oh(1)$ faces in the surface embedding of a graph and replace them with certain graphs of pathwidth $\Oh(1)$, which are called the {\em vortices}.
Second, one can insert a set of $\Oh(1)$ vertices, called the {\em apices}, and connect them possibly to anyone.
The structural theorem of Robertson and Seymour~\cite{robertson2003graph} states that for any graph $H$, all the graphs excluding $H$ as a minor can be constructed from graphs embeddable in some fixed surface using these two operations and by taking a clique-sum of them.

When the excluded minor $H$ is an apex graph, this description can be slightly restricted~\cite{DemaineHK09}. 
What is important to us, apex-minor-free graphs enjoy the {\em local treewidth property}:
their treewidth is bounded in terms of the (unweighted) diameter and this relation is known to be linear~\cite{DemaineH04}.
Note that the diameter of an edge-weighted graph cannot be related to treewidth by any means because the diameter can be reduced by scaling the weights, while the topological structure of the graph remains the same.
Nevertheless, we will see that the local treewidth property is still useful in the considered setting.

{\bf The main message of this work is that the local treewidth property can be used instead of shortest path separators for the sake of approximating diameter and radius, as well as for efficient construction of distance oracles.}
We first employ it to tackle apex-minor-free graphs but we also show that its usefulness is not limited to such graph classes.

\subparagraph{Additive core-sets.}
Suppose we have a list of vertices $S = (s_0,s_1,\dots,s_{k-1})$ in $G$ and we want to (approximately) learn all the distance patterns, between a vertex $v$ and all of $S$. 
Chan and Skrepetos~\cite{ChanS17} employed Cabellos's technique of abstract Voronoi diagrams~\cite{Cabello19} to build a data structure to browse such distance patterns in the case when $G$ is planar and $S$ lie on a single face.
However, this technique relies heavily on the embedding of the graph.

A more versatile argument was given by Li and Parter~\cite{li2019planar} in order to approximate planar diameter in a distributed regime.
Following their convention, 
a subset $U' \sub U$ is called a {\em $\delta$-additive core-set} of $U$
if for every $u \in U$ there exists $u' \in U'$ such that for each $i \in [0,k)$ we have $|d(u,s_i) - d(u',s_i)| \le \delta$.
When $\delta = \eps\cdot \diam(G)$, a~trivial rounding argument yields a {$\delta$-additive core-set} of size $(1/\eps)^k$.
Remarkably, the dependence on $k$ can be in fact removed from the exponent~\cite{li2019planar}.
This theorem relies on a bound on the VC dimension of a certain set system and this argument carries over to any minor-free graph class~\cite{LeW24}.
A~direct consequence of these theorems reads as follows.
(Here $K_h$ stands for the $h$-vertex clique.)

\begin{restatable}[{Corollary of {\cite[Theorem 1.3]{LeW24}}}]{lemma}{lemCoreSet}
\label{lem:corest-exist}
    Let $G$ be an edge-weighted $K_h$-minor-free graph~$G$, $S = (s_0,\dots,s_{k-1}) \in V(G)^k$, $U \sub V(G)$, and $\eps > 0$.
    Then $U$ admits an $(\eps \cdot \diam(G))$-additive core-set (with respect to $S$) of size $\Oh(k^{h-1}\cdot (1/\eps)^{h})$.
\end{restatable}

For the sake of completeness, we provide a proof in \Cref{sec:prelim}.
The local treewidth property will allow us to apply this lemma for $k = \poly(1/\eps)$  to efficiently compress information about distances in large subgraphs.

\subsection{Our contribution}

A crucial property of a separator $S$ consisting of $\Oh(1)$ shortest paths is that $S$ can be covered by $\Oh(1/\eps)$ subgraphs of strong\footnote{A subgraph $H$ of $G$ has {\em strong diameter} $D$ if every pair of vertices $u,v \in V(H)$ can be connected by a~path in $H$ of length at most $D$. For comparison, {\em weak diameter} refers to distances measured in $G$.} diameter $\eps\cdot\diam(G)$.
Then the interaction between the two sides of $S$ can be channeled through $\Oh(1/\eps)$ {\em portals} on $S$ when small distance distortion is allowed.
We show that balanced separators of this kind exist and can be efficiently found in any class excluding an apex minor.
Specifically, we construct a clustering of an apex-minor-free graph into pieces of small strong diameter such that the graph obtained by contracting each cluster has bounded treewidth.
We use this construction to estimate the eccentricity $\ecc_G(v)$ of every vertex $v$, that is, $\max_{u \in V(G)} d_G(v,u)$.

\begin{restatable}{theorem}{thmFirst}
\label{thm:first}
    Let $H$ be an apex graph of size $h$. There is  an algorithm that, given an edge-weighted $H$-minor-free graph $G$ and $\eps >0$, runs in time $\Oh_{h}((1/\eps)^{3h+2} \cdot n\log^2 n)$ and estimates eccentricity of each vertex vertex $v \in V(G)$ within additive error $\eps\cdot\diam(G)$.
    Moreover, for each $v \in V(G)$ the algorithm outputs vertex $v'$ for which $d_G(v,v') \ge \ecc_G(v) - \eps\cdot\diam(G)$. 
\end{restatable}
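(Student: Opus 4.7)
The plan is a divide-and-conquer on a cluster tree decomposition, driven by the local treewidth property. I would first compute a vertex partition $\mathcal{P}=\{C_1,\dots,C_t\}$ of $V(G)$ such that every cluster $C_i$ has strong diameter at most $\eps\cdot\diam(G)$ and the quotient $G/\mathcal{P}$ has treewidth $\Oh(1/\eps)$. Since $G/\mathcal{P}$ remains apex-minor-free (minor-exclusion is preserved under contraction) and its unweighted diameter can be forced to $\Oh(1/\eps)$ via the clustering, the treewidth bound follows from the linear local treewidth property. A balanced tree decomposition of $G/\mathcal{P}$ can then be obtained, and its centroid decomposition yields the recursive framework: at each level, the centroid bag corresponds to a union $B\sub V(G)$ of $\Oh(1/\eps)$ clusters separating the remaining vertices into pieces of bounded fraction.

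At each recursion level, pick one representative $s_C$ per cluster $C$ intersecting $B$, obtaining $S=(s_0,\dots,s_{k-1})$ with $k=\Oh(1/\eps)$, and run SSSP from each $s_i$ in $G$. For any $v$ and any $u$ on opposite sides of $B$, the shortest $(v,u)$-path crosses some $b\in B$, and rerouting through the representative of the cluster containing $b$ costs at most $2\eps\cdot\diam(G)$ thanks to the strong-diameter guarantee. Hence
\[
d_G(v,u)\;\le\;\tilde d(v,u):=\min_{i\in[k]}\bigl(d_G(v,s_i)+d_G(s_i,u)\bigr)\;\le\;d_G(v,u)+2\eps\cdot\diam(G).
\]
To compute $\max_u \tilde d(v,u)$ without iterating over all $u$, apply \Cref{lem:corest-exist} to the opposite-side vertex set with respect to $S$ and error $\eps\cdot\diam(G)$, producing a core-set $U'$ of size $\Oh(k^{h-1}(1/\eps)^h)=(1/\eps)^{\Oh(h)}$. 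Then $\max_{u'\in U'}\tilde d(v,u')$ approximates $\max_u\tilde d(v,u)$ within additive $\eps\cdot\diam(G)$, and its argmax $u^*$ is a valid witness because $d_G(v,u^*)\ge\tilde d(v,u^*)-2\eps\cdot\diam(G)$.

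For each vertex $v$, the true eccentricity witness is separated from $v$ at exactly one recursion level (vertices remaining in a common leaf cluster are within $\eps\cdot\diam(G)$ anyway), so taking the maximum estimate and the associated witness across levels incurs only a single-level additive error of $\Oh(\eps\cdot\diam(G))$ and never a cumulative one; this is absorbed by rescaling $\eps$. The per-level cost is $\tilde\Oh(n\cdot(1/\eps)^{\Oh(h)})$: SSSP from $\Oh(1/\eps)$ sources costs $\Oh(n\log n/\eps)$, the core-set construction of \Cref{lem:corest-exist} contributes a polynomial factor in $1/\eps$ times $n\log n$, and the $n\cdot|U'|\cdot k$ core-set scans dominate. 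A balanced centroid decomposition yields $\Oh(\log n)$ levels, matching the stated bound $\Oh_h((1/\eps)^{3h+2}\cdot n\log^2 n)$.

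The main obstacle I expect is Step~1: computing in almost-linear time both a clustering of strong diameter $\eps\cdot\diam(G)$ and a tree decomposition of the quotient $G/\mathcal{P}$ of width $\Oh(1/\eps)$. Exact treewidth computation is intractable, and standard constant-factor approximations on minor-free graphs are typically too slow to drive the whole pipeline. A dedicated construction---plausibly combining BFS layering around a well-chosen root with a minor-specific treewidth approximation tailored to apex-minor-free graphs---is needed to achieve the quoted running time, and this is where I would expect the real technical weight of the argument to sit.
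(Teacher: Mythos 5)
Your high-level architecture matches the paper's: a $\delta$-radius clustering built by BFS/Dijkstra layering so that the cluster graph inherits bounded treewidth via the local treewidth property, an approximate tree decomposition from~\cite{FominLSPW18} balanced to depth $\Oh(\log n)$, a separator-rerouting lemma (\Cref{lem:path-touch}), and additive core-sets to compress the ``far side.'' The step you flagged as the main obstacle (the clustering and a fast treewidth approximation) is indeed exactly \Cref{lem:local-tw} plus \Cref{lem:prelim:twd-compute}, so your instinct there is sound.

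However, there are two genuine gaps in the middle of the argument. First, the running time of the SSSP step does not work as stated. You run SSSP from every center in each centroid bag \emph{in the full graph $G$}. A centroid decomposition of a tree decomposition with $\Oh(n)$ nodes produces $\Theta(n)$ centroid bags in total (one per node of $T$), each with up to $\Oh(w)$ clusters, so performing SSSP in $G$ from all those sources costs $\Omega(n^2)$ even before the core-set work. The only way to avoid this is to run SSSP inside the induced subgraphs corresponding to the current piece of the recursion, exactly as the paper's $\inner[t,i,v]$ table does using $G[V_t]$. But then the numbers you compute are \emph{subgraph} distances, and it is no longer immediate that they reflect the true $G$-distances for the pairs you care about. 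The paper closes this gap with a nontrivial argument (\Cref{claim:inner-path}): for any shortest $(u,v)$-path $Q$, one can find a node $t$ high enough that $V(Q)\subseteq V_t$ yet low enough that $Q$ crosses a cluster $i$ with $\loc(i)=t$, so the $G[V_t]$-distance is exact and the rerouting through $c_i$ is available. Your proposal asserts the rerouting bound with global distances, which is correct but unaffordable, and does not contain the $\loc/V_t$ bookkeeping needed to switch to local distances while keeping accuracy.

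Second, once you switch to local or oracle-approximated distances, you cannot directly invoke \Cref{lem:corest-exist} to build the core-set: its proof computes the set system ${\cal \widehat{LP}}$ from \emph{exact} distance tuples, and a priori the Sauer--Shelah size bound could break down for a set of perturbed tuples. The paper needs a separate statement (\Cref{lem:metric:apx-cover}) showing that a greedy $3\delta$-covering of the approximate tuples is still small because the exact tuples form a small $\ell_\infty$-cover by VC dimension, and points in a small ball cannot be far apart after $\delta$-perturbation. This is a short but necessary lemma that your proposal omits; without it, the size of $U'$ --- and hence the per-vertex query time --- is unbounded. A smaller point: your remark that ``vertices remaining in a common leaf cluster are within $\eps\cdot\diam(G)$'' conflates a single cluster $C_i$ (which does have radius $\delta$) with a leaf bag of the (centroid) decomposition, which may contain $\Oh(w)$ clusters and have much larger diameter; the paper sidesteps this by handling the ``ancestor/descendant'' cases of $\loc(u)$ relative to $\loc(v)$ explicitly in \Cref{claim:ecce} rather than relying on leaves being small.
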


As a corollary, we get $(1+\eps)$-approximation for both diameter and radius, 
which are
 respectively the largest and the smallest eccentricity. 

To employ the additive core-sets in the proof of \Cref{thm:first}, we first need to precompute approximate distances for $\Oh((1/\eps)^2\cdot n\log n)$ pairs of vertices.
To this end, we design a distance oracle with preprocessing and query times tailored to our needs.
Similar oracles are available for minor-free graphs~\cite{abraham2006object, chang2024shortcut, kawarabayashi2011linear} but their preprocessing times are far from linear.
An~oracle with additive error $\eps\cdot\diam(G)$ suffices for
the proof of \Cref{thm:first} but we can improve it to multiplicative error by taking into account the metric stretch, i.e., the ratio between the largest and smallest positive distances. 

\begin{restatable}{theorem}{thmOracle}
\label{thm:oracle}
    Let $H$ be an apex graph. 
    Given an edge-weighted $H$-minor-free graph $G$ of metric stretch $W$ and $\eps >0$, one can build an $(1+\eps)$-approximate distance oracle with
    preprocessing time $\Oh_{H}((1/\eps)^{8}\cdot  n\log n\log W)$ 
    and query time $\Oh_H((1/\eps)^{2}\cdot\log n\log W)$.
\end{restatable}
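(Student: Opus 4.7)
The oracle is built in two layers: an outer geometric scale decomposition over $\Oh(\log W)$ doubling scales, and, within each scale, a balanced-separator recursion in the spirit of Thorup, but with the separators supplied by the clustering-plus-contraction primitive underlying \Cref{thm:first} rather than by shortest paths. Fix the smallest edge weight $w_{\min}$ and set $\Delta_i = 2^i w_{\min}$ for $i=0,\dots,\lceil\log W\rceil$; a query $(u,v)$ will be routed to the unique scale $i$ with $d_G(u,v)\in[\Delta_i,2\Delta_i)$, located by binary search at query time, which accounts for the $\log W$ factor in both the preprocessing and the query bound.

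\textbf{Scale-$i$ oracle and portals.} At scale $i$, invoke the clustering primitive at resolution $\eps\Delta_i/c_0$ to partition $V(G)$ into pieces of strong diameter at most $\eps\Delta_i/c_0$ whose contraction yields an $H$-minor-free quotient graph. The subgraphs of the quotient that are relevant at this scale (those spanning original distance $\Oh(\Delta_i)$) have unweighted hop-diameter $\Oh(1/\eps)$, so by the local treewidth property of apex-minor-free graphs they have treewidth $\Oh_H(1/\eps)$. A standard bounded-treewidth balanced-separator recursion on the quotient then yields a binary decomposition tree $\mathcal T_i$ of depth $\Oh(\log n)$; the separator at each internal node is realized in $G$ as the union of $\Oh_H(1/\eps)$ original clusters, from each of which we designate $\Oh(1/\eps)$ \emph{portals} placed at resolution $\eps^2\Delta_i$ inside the cluster. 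Any shortest $(u,v)$-path in $G$ that is first separated at a node $\nu$ of $\mathcal T_i$ passes within $\eps^2\Delta_i$ of some portal attached to $\nu$, so routing distances through portals incurs only an additive error of $\eps\cdot d_G(u,v)$ at this scale.

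\textbf{Precomputation, query, main obstacle.} For every vertex $v$ and every portal $p$ on $v$'s root-to-leaf path in $\mathcal T_i$ we need a $(1+\eps)$-approximation $\tilde d(v,p)$; computed naively by one Dijkstra per portal this would be prohibitively expensive. Instead, distance profiles are propagated top-down through $\mathcal T_i$: at each node every vertex in the current piece carries an $(1/\eps)^{\Oh(1)}$-sized compressed profile of approximate distances to ancestor portals, with \Cref{lem:corest-exist} (the additive core-set) keeping the profile size independent of the number of portals. Transitioning to a child combines intra-cluster Dijkstra (cheap since the strong diameter of each cluster is $\eps\Delta_i$) with interface information carried by the bags of the tree decomposition of the quotient. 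A query at $(u,v)$ locates the scale, walks up $\mathcal T_i$ to the lowest ancestor $\nu$ separating $u$ and $v$, and returns $\min_p\bigl(\tilde d(u,p)+\tilde d(p,v)\bigr)$ over the $\Oh_H((1/\eps)^2)$ portals of $\nu$. The main obstacle is this top-down propagation: we must keep the per-(vertex, level) cost at $\poly_H(1/\eps)$ while ensuring that core-set compression does not compound the approximation error over the $\Oh(\log n)$ levels. The $(1/\eps)^8$ preprocessing exponent is the outcome of careful bookkeeping that combines the $\Oh(1/\eps^2)$ portal count per separator, the $\poly(1/\eps)$ core-set size, and the $\poly(1/\eps)$ cost of the local Dijkstras used to transport profiles between adjacent clusters in the quotient's tree decomposition.
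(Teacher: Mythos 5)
Your high-level skeleton (doubling scales, low-radius clustering, local treewidth of the contracted graph, balanced tree decomposition, cluster centers playing the role of portals) matches the paper's architecture, and your count of $\Oh_H((1/\eps)^2)$ portals per separator agrees with the width of the tree decomposition the paper actually computes. But the part you flag as ``the main obstacle'' --- propagating compressed distance profiles top--down through $\mathcal T_i$ using \Cref{lem:corest-exist} while controlling error accumulation over $\Oh(\log n)$ levels --- is precisely the step your proposal does not resolve, and you would genuinely run into trouble there: iterated core-set compression at each level introduces a fresh additive error, and without a further idea these errors stack up to $\Omega(\log n)\cdot\eps\Delta_i$, which is not $(1+\eps)$-multiplicative. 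The paper sidesteps this entirely and uses no core-sets in the oracle at all. Core-sets appear only in the eccentricity computation (\Cref{thm:preprocessing}(\ref{item:thm:ecce})), and even there only once per bag, never iterated.

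What the paper does instead (inside \Cref{thm:preprocessing}(\ref{item:thm:oracle}), invoked by \Cref{lem:oracle-delta}): for each node $t$ of the balanced decomposition and each cluster $i$ with $\loc(i)=t$, it computes \emph{exact} distances from the center $c_i$ to all of $V_t$ \emph{within the induced subgraph} $G[V_t]$, by one linear-time SSSP on the $K_h$-minor-free graph $G[V_t]$. Because $\sum_{t}|V_t|=\Oh(n\log n)$ and each bag contains $\Oh(w)$ clusters, the total preprocessing cost is $\Oh_h(w\cdot n\log n)$, and a query just minimizes $\inner[t,i,u]+\inner[t,i,v]$ over the $\Oh(\log n)$ ancestors $t$ and $\Oh(w)$ clusters $i\in\chi(t)$. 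Crucially, correctness relies on the observation that the shortest $(u,v)$-path is entirely contained in $G[V_t]$ for the lowest $t$ whose subtree contains all clusters the path touches; so the path is rerouted through $c_i$ exactly once (\Cref{lem:path-touch}), incurring a single additive $2\delta$ error with no accumulation. This is the idea your proposal is missing. Two smaller inaccuracies: (i) portals at resolution $\eps^2\Delta_i$ are unnecessary --- one center per cluster of radius $\eps\Delta_i$ suffices; and (ii) the exponent $8$ in the preprocessing time comes from the treewidth computation of \Cref{lem:prelim:twd-compute}, $\Oh(k^7 n\log n)$ with $k=\Oh(1/\eps)$, multiplied by the $\Oh(1/\eps)$ factor from the sum $\sum_I |V(G^I)|=\Oh((1/\eps)\cdot n)$ over the overlapping layer windows, not from the bookkeeping you describe.
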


The simplest case not covered by \Cref{thm:first} is of course given by the class of apex graphs.
This class no longer enjoys the local treewidth property.
Moreover, some problems which are easy on planar graphs become much harder after insertion of a single apex~\cite{barahona1983max, ChakrabortyG23}.
Note that removing just a single vertex can  increase the diameter significantly and so our clustering technique cannot be directly applied to decompose such a graph.


We give a stronger theorem that in particular covers the apex graphs. 
As a consequence of Robertson and Seymour's structural theorem, for every graph $K$ there exists an apex graph $H$ and a constant $c$, such that every $K$-minor-free graph can be constructed via a clique-sum from graphs $G_1,\dots,G_\ell$
with the following property: each $G_i$ becomes $H$-minor-free after deletion of $c$ vertices~{\cite[Theorem VI.1]{CohenAddadLPP23}}.
We extend \Cref{thm:first} to such graphs, assuming that the deletion set is known.

\begin{restatable}{theorem}{thmApexDiam}
\label{thm:apex}
    Let $h,c \in \nn$ and $\cal H$ be a graph class excluding an apex graph of size $h$. There is an algorithm that, given an edge-weighted graph $G$, $\eps >0$, and a set $A \sub V(G)$ of size $c$, for which $G-A \in \cal H$, runs in time $\Oh_{h+c}((1/\eps)^{\alpha}\cdot n\log^2 n)$, where $\alpha = \Oh((h+1)(c+1))$, and estimates $\diam(G)$ within factor
     $(1+\eps)$.
\end{restatable}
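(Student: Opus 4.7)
The plan is to reduce \Cref{thm:apex} to \Cref{thm:first} applied to the graph $G-A\in\mathcal{H}$ via a layering argument that uses the distances to the $c$ apices. First, I~run Dijkstra from each apex $a\in A$ in $G$, obtaining $d_G(a,v)$ for every $v\in V(G)$ in total $\Oh(cn\log n)$ time; this also produces the eccentricities $\ecc_G(a)$ and, in particular, a $2$-approximation $D$ of $\diam(G)$. Any shortest $(u,v)$-path in $G$ either avoids $A$ entirely or decomposes at some apex it visits, so
\[
d_G(u,v)=\min\Bigl(d_{G-A}(u,v),\;\min_{a\in A}\bigl(d_G(u,a)+d_G(a,v)\bigr)\Bigr),
\]
reducing distance queries in $G$ to queries in the apex-free part plus constant-time apex lookups.

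Second, for each $v\in V(G)$ I~assign the layer vector $\vec\ell(v):=\bigl(\lfloor d_G(v,a_i)/(\eps D)\rfloor\bigr)_{i\in[c]}$, partitioning $V(G)$ into $\Oh_c((1/\eps)^c)$ buckets $V_{\vec\ell}$. For $u\in V_{\vec\ell}$ and $v\in V_{\vec\ell'}$ the apex routing cost equals $M(\vec\ell,\vec\ell'):=\min_i(\ell_i+\ell_i')\eps D$ up to additive $\Oh(\eps D)$, so $d_G(u,v)\approx\min(d_{G-A}(u,v),\,M(\vec\ell,\vec\ell'))$, and consequently
\[
\diam(G)\;\approx\;\max_{(\vec\ell,\vec\ell')}\;\max_{u\in V_{\vec\ell},\,v\in V_{\vec\ell'}}\;\min\bigl(d_{G-A}(u,v),\,M(\vec\ell,\vec\ell')\bigr).
\]
It now suffices to approximate the inner bipartite maximum for each of the $\Oh((1/\eps)^{2c})$ bucket pairs within additive $\Oh(\eps D)$.

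Third, for a fixed bucket pair I~extract from each bucket an additive core-set of size $(1/\eps)^{\Oh(h)}$ via \Cref{lem:corest-exist} applied to $G-A$, using as sites the apices $A$ together with $\Oh(1/\eps)$ layer representatives of the opposite bucket; the construction is done inside $d_{G-A}$-balls of radius $\Oh(D)$ around bucket representatives so that the internal additive error stays $\Oh(\eps D)$. I~then iterate over all core-set pairs across the two buckets, evaluate $\min(d_{G-A},M(\vec\ell,\vec\ell'))$ using the combined distance oracle built in the first step, and take the maximum; the overall diameter estimate is the maximum of these per-bucket-pair values.

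The main technical obstacle is the gap between $\diam(G-A)$ and $\diam(G)=\Theta(D)$: a black-box application of \Cref{thm:first} or \Cref{lem:corest-exist} to the whole of $G-A$ would yield error proportional to $\diam(G-A)$, which may be arbitrarily larger than the permitted $\Oh(\eps D)$. The cap at $M\le D$ sidesteps this because any pair $(u,v)$ with $d_{G-A}(u,v)>\Theta(D)$ contributes exactly $M(\vec\ell,\vec\ell')$ to the capped objective and is detected without a fine $d_{G-A}$-estimate; hence it suffices to run the subroutines on bounded-diameter balls of $G-A$. The final bookkeeping---$\Oh((1/\eps)^{2c})$ bucket pairs, $(1/\eps)^{\Oh(h)}$ core-set representatives per bucket, and the per-query oracle cost---yields the claimed $\Oh_{h+c}\bigl((1/\eps)^{\Oh((h+1)(c+1))}\cdot n\log^2 n\bigr)$ running time.
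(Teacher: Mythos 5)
The plan is genuinely different from the paper's, which does \emph{not} run a per-bucket-pair optimization: the paper (in \Cref{lem:diam-reduction}) applies an \emph{irrelevant vertex rule} that \emph{modifies} $G$ so that $G'-A'$ splits into connected components of strong diameter $\Oh((1/\eps)^cD)$, and then runs the apex-minor-free machinery (\Cref{lem:local-tw}, \Cref{lem:prelim:twd-compute}, \Cref{thm:preprocessing}) once on the whole modified graph, doing a threshold test for each guess $D$. Your formulation via $\diam(G)\approx\max_{(\vec\ell,\vec\ell')}\max_{u,v}\min\bigl(d_{G-A}(u,v),M(\vec\ell,\vec\ell')\bigr)$ is correct, but the way you propose to evaluate the inner maximum has a real gap.

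The gap is exactly the one the paper's irrelevant vertex rule exists to plug. Your core-set construction is confined to $d_{G-A}$-balls of radius $\Oh(D)$ around a \emph{representative} of each bucket, and you claim the cap at $M$ excuses you from looking at vertices outside the ball. That is not enough: a bucket $V_{\vec\ell}$ whose signature is ``irrelevant'' (some apex gives a short route to every other signature) can be arbitrarily spread out in $G-A$, since nothing bounds $d_{G-A}$ between two vertices that merely share the same rounded apex distances. Two vertices $u,u'\in V_{\vec\ell}$ can both lie outside the ball around the representative while being $d_{G-A}$-close to each other and to some $v\in V_{\vec\ell'}$; then $\min(d_{G-A}(u,v),M)$ can be strictly between $0$ and $M$ and can realize the maximum for the bucket pair, yet it is never examined. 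Observing that ``pairs with $d_{G-A}>\Theta(D)$ contribute $M$'' does not detect \emph{whether} such a pair exists, nor does it help compute $\max d_{G-A}$ over the pairs with $d_{G-A}<M$ when the bucket is not contained in any single ball. What the paper instead proves (\Cref{claim:strong-radius}) is that a bucket with a \emph{relevant} signature does sit in a $2D$-ball around its representative in $G-A$; for irrelevant signatures it deletes the edges of $G-A$ incident to vertices outside all $3D$-balls around relevant representatives and shows (\Cref{claim:remove}) that this cannot increase the approximate diameter, because any short path through such a vertex connects two irrelevant endpoints, which have an apex route of length at most $(1+\eps)D$. Without that edge-deletion step, your bucket pairs are not reducible to bounded-diameter instances.

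Two smaller issues. First, you need a $d_{G-A}$ distance oracle to build the core-sets via \Cref{lem:metric:apx-cover}, but you never say how to build one for a graph of unbounded diameter; the paper's \Cref{lem:oracle-delta} does this through the same clustering + treewidth machinery you are trying to avoid, so you are implicitly still relying on \Cref{lem:local-tw}. Second, the running time accounting you wave at (roughly $(1/\eps)^{2c}\cdot(1/\eps)^{\Oh(h)}$) would give an exponent linear in $c+h$, noticeably better than the $\Oh((h+1)(c+1))$ stated in the theorem; that should itself be a red flag that the approach as sketched does not account for all the work required.
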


Unlike the apex-minor-free case, this time we do not estimate eccentricities  because \Cref{thm:apex} 
involves a certain ``irrelevant vertex rule'' that 
only preserves the diameter (approximately). 
The assumption that the set $A$ is given in the input can be dropped for the class $\cal H$ of planar graphs, i.e., when the input is a $d$-apex graph.
In this case, one can find the set $A$ in time $2^{\Oh(d \log d)}\cdot n$~\cite{JansenLS14}, which leads to the following unconditional statement.

\begin{corollary}
\label{cor:planar}
        There is an $\Oh_{d}((1/\eps)^{\Oh(d)} \cdot n\log^2 n)$-time algorithm that estimates the diameter of an~edge-weighted $d$-apex graph within factor $(1+\eps)$. 
\end{corollary}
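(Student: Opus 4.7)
The plan is to reduce \Cref{cor:planar} directly to \Cref{thm:apex} after computing an apex deletion set. By definition, a $d$-apex graph $G$ admits a set $A \sub V(G)$ with $|A| \le d$ such that $G - A$ is planar, and the first step is to find such an $A$ explicitly. For this I would invoke the algorithm of Jansen, Lokshtanov and Saurabh~\cite{JansenLS14}, which computes a minimum apex deletion set in time $2^{\Oh(d \log d)} \cdot n$. This cost is absorbed into the $\Oh_d(\cdot)$ factor.

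Next I would set $\cal H$ to be the class of planar graphs. Since planar graphs exclude $K_5$ as a minor, and $K_5$ is an apex graph (deleting any vertex yields the planar graph $K_4$), the class $\cal H$ excludes an apex graph of size $h = 5$. In particular $h$ is an \emph{absolute} constant, independent of $d$. Also $G - A \in \cal H$ by the definition of $A$, so all hypotheses of \Cref{thm:apex} are satisfied with parameters $h = 5$ and $c = d$.

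Invoking \Cref{thm:apex} with these parameters, the algorithm runs in time $\Oh_{h+c}((1/\eps)^{\alpha} \cdot n \log^2 n)$ with $\alpha = \Oh((h+1)(c+1)) = \Oh(d)$, and outputs a $(1+\eps)$-approximation of $\diam(G)$. Combining this with the preprocessing time of $2^{\Oh(d\log d)}\cdot n$ for computing $A$ yields the promised overall bound $\Oh_d((1/\eps)^{\Oh(d)}\cdot n\log^2 n)$.

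Since the heavy lifting is done by \Cref{thm:apex} and by the fixed-parameter algorithm for \textsc{Planar Vertex Deletion}, there is no real obstacle here; the only subtlety is to verify that planar graphs genuinely exclude an apex minor of constant size, which is witnessed by $K_5$ (or alternatively by $K_{3,3}$, whose vertex-deleted subgraph $K_{2,3}$ is planar). This ensures $h$ is a universal constant, so that the exponent in $1/\eps$ depends only on $d$ as claimed.
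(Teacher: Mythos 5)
Your proposal is correct and matches the paper's own argument exactly: compute the apex deletion set via the Jansen--Lokshtanov--Saurabh algorithm in time $2^{\Oh(d\log d)}\cdot n$, observe that planar graphs exclude the apex graph $K_5$ (so $h$ is an absolute constant), and invoke \Cref{thm:apex} with $c=d$. The only bit the paper leaves implicit, and you spell out, is the verification that $K_5$ is itself an apex graph.
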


This result can be regarded as a generalization of the planar case in a direction that is orthogonal to \Cref{thm:first}.
Unfortunately, when the class $\cal H$ excludes an arbitrary apex graph $H$,
we do not know whether one can drop the assumption about the set $A$ being provided.
Indeed, the fastest parameterized algorithm to find set $A$ for which $G-A$ is $H$-minor-free (where $H$ is some apex graph) requires quadratic time~\cite{SauST22}.
We discuss the perspective of handling more general graph classes in \Cref{sec:conclusion}. 

\subsection{Related Work}

Apart from the ones used in this work, there are more VC set systems considered in the context of minor-free graph metrics
~\cite{bousquet2015vc, braverman2021coresets, ChepoiEV07, Cohen-AddadD0SS25, ducoffe2022diameter, KarczmarzZ25}.
Some other approaches to such metrics
are based on a decomposition into low-diameter subgraphs, including the classic KPR theorem~\cite{KPR93}.
Our low-diameter clustering scheme 
bears resemblance to 
the {\em shortcut partitions}~\cite{chang2023covering, chang2023resolving, chang2024shortcut} which provide even stronger guarantees but it is not known how to construct them in almost-linear time.
The related {\em buffered cop decompositions}~\cite{chang2024shortcut} yield cluster graphs of bounded weak coloring numbers~\cite{BourneufP25}.
These in turn are connected to {\em sparse covers}~\cite{abraham2007strong, busch2007improved, Filtser2004sparse}, however there the low-diameter subgraphs need not to be disjoint.

An alternative way to represent a metric by a bounded treewidth graph is a~{\em metric embedding}.
For example, an edge-weighted planar graph $G$ can be embedded into a metric given by a graph of treewidth $f(\eps)$ so that the additive distortion is $\eps \cdot \diam(G)$~\cite{FoxEpsteinKS19}.
The function $f$ can be estimated as $f(\eps) = \Oh(\eps^{-4})$~\cite{chang2023covering}.
For apex-minor-free graphs, the current best treewidth bound is $f(\eps) = 2^{\Oh(1/\eps)}$~\cite{chang2024shortcut}.
In the latter case, we again do not know if such an embedding can be computed in almost-linear time.
See also \cite{CohenAddadLPP23, Arnold22} for metric embeddings of minor-free graphs and a variant with multiplicative distortion. 

For more  advances in diameter computation, see \cite{ChanSkrepetos2019, chang24, duraj2023better}, and for recent developments in planar distance oracles, see \cite{ChangKT22, charalampopoulos2023almost, fredslundhansen, gawrychowski2018better, Le23}.

\subsection{Organization of the Paper}

We begin with \Cref{sec:prelim} including preliminaries on VC set systems and additive core-sets.
The following three sections contain the main technical lemmas.
Sections \ref{sec:apex-free} and \ref{sec:tree} culminate with the proof of Theorem \ref{thm:first}.
\Cref{sec:apices} contains the crucial lemma for \Cref{thm:apex}, which is then proven in \Cref{app:main}, alongside \Cref{thm:oracle}. 
The proofs of the statements marked with~$(\bigstar)$ are located in \Cref{app:missing}.

\section{Preliminaries}
\label{sec:prelim}

All our statements concern undirected graphs with real positive edge weights.
In an unweighted graph, each weight is set to one. 

\begin{definition}[Treewidth]\label{def:prelim:tw}
A \emph{tree decomposition} of a graph $G$ is a pair $(T, \chi)$ where $T$ is a tree, and~$\chi \colon V(T) \to 2^{V(G)}$ is a function, such that:
\begin{enumerate}
    \item for each~$v \in V(G)$ the nodes~$\{t \mid v \in \chi(t)\}$ form a {non-empty} connected subtree of~$T$, \label{item:tree:based:connected}
    \item for each edge~$uv \in E(G)$ there is a node~$t \in V(T)$ with~$\{u,v\} \subseteq \chi(t)$. \label{item:tree:based:edge}
\end{enumerate}
{The \emph{width} of $(T, \chi)$ is defined as~$\max_{t \in V(T)} |\chi(t)| - 1$.}
The \emph{treewidth} of a graph $G$ (denoted $\tw(G))$ is the minimal width a tree decomposition of $G$.
\end{definition}


\subparagraph*{VC dimension.}
A set system is a collection $\fcal$ of subsets of some set $U$.
We say that a subset $Y \sub A$ is {\em shattered} by $\fcal$ if $\{Y \cap S \colon S \in \fcal\} = 2^Y$ , that is,
every subset of $Y$ is an intersection of $Y$ with some $ S \in \fcal$. The VC dimension of a set system is the size of the largest  shattered subset. The notion of VC dimension
was introduced by Vapnik and Chervonenkis~\cite{vapnik2015uniform}. What is important to us, a set system of bounded VC dimension cannot be too large.

\begin{lemma}[Sauer–Shelah Lemma \cite{sauer1972density}]
\label{lem:size-bound}
    Let $\fcal$ be a set system  of VC dimension $d$ over a set of size $m$.
    Then $|\fcal| = \Oh(m^d)$.
\end{lemma}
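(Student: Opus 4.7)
The plan is to prove the slightly stronger quantitative form $|\fcal|\le\sum_{i=0}^{d}\binom{m}{i}$, which immediately yields $|\fcal|=\Oh(m^d)$ when $d$ is fixed, since the dominating term is $\binom{m}{d}=\Theta(m^d)$. I would proceed by induction on $m+d$, with the base cases $m=0$ (at most one subset of an empty set) and $d=0$ (no two sets can differ on any single element, so $|\fcal|\le1$) both trivially satisfying the bound.

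For the inductive step I would fix an arbitrary element $x\in U$ and split $\fcal$ along it using the standard shifting trick. Define
\[
\fcal\setminus x \;=\; \{\,S\setminus\{x\}\colon S\in\fcal\,\}\qquad\text{(viewed as a set system on }U\setminus\{x\}\text{)},
\]
and
\[
\fcal_x \;=\; \{\,S\in\fcal\colon x\notin S,\ S\cup\{x\}\in\fcal\,\}.
\]
A direct accounting shows $|\fcal|=|\fcal\setminus x|+|\fcal_x|$: pairs $\{S,S\cup\{x\}\}\subseteq\fcal$ collapse to a single element in $\fcal\setminus x$, and the surplus is exactly counted by $\fcal_x$.

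The key structural observation is that $\fcal\setminus x$ has VC dimension at most $d$ on a ground set of size $m-1$, while $\fcal_x$ has VC dimension at most $d-1$. The first bound is immediate. For the second, if $Y\subseteq U\setminus\{x\}$ is shattered by $\fcal_x$, then for every $T\subseteq Y$ there is some $S\in\fcal_x$ with $S\cap Y=T$; since both $S$ and $S\cup\{x\}$ belong to $\fcal$, the sets $T$ and $T\cup\{x\}$ are both realized as intersections with $Y\cup\{x\}$, so $Y\cup\{x\}$ is shattered by $\fcal$. Thus $|Y|+1\le d$, as required.

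Applying the inductive hypothesis to both parts and combining via Pascal's identity,
\[
|\fcal|\;\le\;\sum_{i=0}^{d}\binom{m-1}{i}+\sum_{i=0}^{d-1}\binom{m-1}{i}\;=\;\sum_{i=0}^{d}\binom{m}{i},
\]
which closes the induction and delivers the claimed $\Oh(m^d)$ bound. The argument is entirely elementary; I do not foresee any serious obstacle, the only point demanding care being the VC-dimension drop for $\fcal_x$, where the lifting of a shattered set $Y$ to $Y\cup\{x\}$ must use both members of each pair $\{S,S\cup\{x\}\}$ guaranteed by the definition of $\fcal_x$.
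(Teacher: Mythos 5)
Your proof is correct, and it is the standard inductive ``shifting'' argument for the Sauer--Shelah lemma. The paper itself does not prove this statement: it cites it as a classical result from~\cite{sauer1972density} and uses it as a black box, so there is no in-paper proof to compare against. Your stronger bound $|\fcal|\le\sum_{i=0}^{d}\binom{m}{i}$ is the usual precise form, and all the key steps --- the partition $|\fcal|=|\fcal\setminus x|+|\fcal_x|$, the drop in VC dimension for $\fcal_x$ via lifting a shattered set $Y$ to $Y\cup\{x\}$, and the Pascal-identity telescoping --- are handled correctly.
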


\subsection{Additive core-sets}

For a vector $\bf x \in \mathbb{R}^k$ its $\ell_\infty$-norm is defined as $\ell_\infty({\bf x}) = \max_{i=1}^k |x_i|$.

\begin{definition}[\cite{li2019planar}]
For a sequence of vertices $S = (s_0,\dots,s_{k-1})$ from $G$ we define the {\em distance tuple} of $v$ with respect to $S$ as $\Phi_S(v) = \left(d_G(v,s_0), \dots, d_G(v,s_{k-1})\right)$.

For a vertex set $V$, a subset $V' \sub V$ is called a {\em $\delta$-additive core-set} with respect to $S$ if for every $v \in V$ there exists $v' \in V'$ such that $\ell_\infty(\Phi_S(v) - \Phi_S(v')) \le \delta$.
\end{definition}

We will take advantage of the bound on VC dimension of certain set systems related to a~graph metric.
Le and Wulff-Nilsen~\cite{LeW24} considered a slight modification of the set systems introduced by Li and Parter~\cite{li2019planar}, and so they denoted them by ${\cal \widehat{LP}}$.

\begin{definition}[\cite{LeW24}]
\label{def:vc}
    Let $G$ be an  edge-weighted graph, $S$ be a $k$-tuple of vertices from $G$, and $M \sub \rr$.
    For $v \in V(G)$ we define:

\[
\hat{X}_v = \{(i,\Delta) \colon i \in [1,k-1],\, \Delta \in M,\, d_G(v,s_i) - d_G(v,s_0) \le \Delta\}. 
\]
Let ${\cal \widehat{LP}}_{G,M}(S) = \{{\hat X}_v \colon v \in V(G)\}$ be a collection of subsets of the ground set $[k - 1] \times M$.
\end{definition}

\begin{theorem}[{\cite[Theorem 1.3]{LeW24}}]
\label{lem:vc-bound}
    Let $G$ be an  edge-weighted $K_h$-minor free graph and $S,M$ be as in \Cref{def:vc}.
    Then ${\cal \widehat{LP}}_{G,M}(S)$ has VC dimension at most $h-1$.
\end{theorem}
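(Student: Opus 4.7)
The goal is to show that if $h$ pairs $(i_1,\Delta_1),\dots,(i_h,\Delta_h) \in [k-1]\times M$ are shattered by $\widehat{\cal LP}_{G,M}(S)$, then $G$ must contain $K_h$ as a minor, contradicting $K_h$-minor-freeness. Shattering provides, for every $J \subseteq [h]$, a witness vertex $v_J \in V(G)$ satisfying
\[
 d_G(v_J,s_{i_j}) - d_G(v_J,s_0) \le \Delta_j \iff j \in J.
\]
The first step is a normalization: by translating each threshold I may assume $\Delta_j$ equals some attainable value of $d_G(\cdot,s_{i_j})-d_G(\cdot,s_0)$, so each condition describes a genuine ``side'' of a shortest-path bisector between $s_0$ and $s_{i_j}$. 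The set system remains unchanged, but the witnesses $v_J$ now lie on well-defined sides of $h$ bisectors simultaneously realizing all $2^h$ sign patterns.

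\textbf{From witnesses to a $K_h$-minor.} For each $j\in[h]$ define the auxiliary signed graph $G_j$ obtained by combining a shortest-path tree $T_0$ rooted at $s_0$ with a shortest-path tree $T_{i_j}$ rooted at $s_{i_j}$. The idea is that the ``bisector structure'' given by $d_G(\cdot,s_{i_j})-d_G(\cdot,s_0)\le\Delta_j$ induces a connected region in $G_j$: any two vertices on the same side can be joined inside that side by concatenating pieces of $T_0$ and $T_{i_j}$, because moving along $T_0$ decreases $d_G(\cdot,s_0)$ in unit-of-weight steps and moving along $T_{i_j}$ decreases $d_G(\cdot,s_{i_j})$ in unit-of-weight steps, so any ``sign flip'' along a shortest-tree path must be witnessed by an edge crossing the bisector. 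Using this, I build the branch set $B_j$ to consist of $s_{i_j}$ together with a spanning tree of the appropriate bisector-side, augmented with connecting paths to witness vertices that differ only in the $j$-th coordinate.

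\textbf{Ensuring disjointness and pairwise adjacency.} The crucial step is to choose the $2^h$ witnesses together with the $2h$ shortest-path trees so that the resulting $h$ branch sets are pairwise vertex-disjoint and pairwise adjacent. Disjointness will be ensured by a standard minimality argument: among all choices of witnesses and shortest paths, pick one minimizing the total number of vertices used; then any shared vertex between $B_j$ and $B_{j'}$ can be ``rerouted'' to save vertices, contradicting minimality. Pairwise adjacency follows because every pair of branch sets $B_j, B_{j'}$ inherits a witness vertex $v_{\{j,j'\}}$ (or one of its close neighbors along the trees) that is simultaneously in the ``inside'' side for coordinates $j$ and $j'$ and the ``outside'' side for all others, forcing an edge between $B_j$ and $B_{j'}$ in the contracted graph.

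\textbf{Main obstacle.} The conceptual heart of the argument — and the place where $K_h$-minor-freeness is actually used — is the construction and disjointness of the branch sets. The sign-pattern witnesses produce a topologically rich configuration in $G$, but turning it into $h$ disjoint connected subgraphs that are pairwise adjacent requires carefully exploiting the metric/shortest-path structure (so that bisector sides are connected in an auxiliary tree-based subgraph) and a minimality argument (to resolve conflicts between branch sets). Once this combinatorial-topological core is in place, the conclusion is immediate: the existence of the $h$ branch sets forces $K_h \preceq_{\mathrm{minor}} G$, contradicting the hypothesis, so no $h$-element subset of $[k-1]\times M$ can be shattered, and the VC dimension of $\widehat{\cal LP}_{G,M}(S)$ is at most $h-1$.
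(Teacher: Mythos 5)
First, note that the paper does not prove this statement at all: it is imported verbatim as \cite[Theorem 1.3]{LeW24}, so the only meaningful comparison is with the proof in that reference, which is a genuinely technical minor-construction argument. Your overall plan (shattering $h$ elements $\Rightarrow$ a $K_h$-minor, hence VC dimension at most $h-1$) is the right high-level strategy, and your observation that the ``inside'' region $\{v \colon d_G(v,s_{i_j})-d_G(v,s_0)\le \Delta_j\}$ is connected (it absorbs the shortest path from any of its members to $s_{i_j}$, by the monotonicity of the distance difference along such a path) is correct and is indeed one of the ingredients one needs. Two small omissions are easy to repair: you should argue that the shattered elements have pairwise distinct indices $i_j$ (two elements $(i,\Delta_1),(i,\Delta_2)$ with $\Delta_1<\Delta_2$ can never be shattered, since the pattern ``first but not second'' is unrealizable), and the connectivity of the ``outside'' side via $T_0$ needs the remark that $s_0$ itself lies outside whenever the element admits an ``out'' witness.

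The genuine gap is the step you label ``Ensuring disjointness and pairwise adjacency,'' which is exactly the heart of the theorem and is only asserted, not proved. The natural candidate branch sets $B_j\subseteq R_j:=\{v\colon d_G(v,s_{i_j})-d_G(v,s_0)\le\Delta_j\}$ cannot be taken to be the regions themselves, because shattering forces $v_{\{j,j'\}}\in R_j\cap R_{j'}$ for every pair, so the regions pairwise intersect; the two requirements thus pull in opposite directions: adjacency wants both $B_j$ and $B_{j'}$ to reach the vicinity of $v_{\{j,j'\}}$, while disjointness forbids them from sharing the shortest paths and tree pieces through that vicinity, and you must satisfy all $\binom{h}{2}$ adjacencies simultaneously while keeping each $B_j$ connected. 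The proposed fix --- ``pick a configuration minimizing the total number of vertices and reroute any shared vertex'' --- is not a proof: rerouting a path out of $B_{j'}$ can destroy the very incidence that was to provide the $B_j$--$B_{j'}$ adjacency, or disconnect the side once the other branch set's vertices are removed, and nothing in the sketch rules this out. Likewise, ``$v_{\{j,j'\}}$ is inside for both $j$ and $j'$, forcing an edge between $B_j$ and $B_{j'}$'' is circular: no edge is forced unless both branch sets have already been built to come within one edge of each other there. The actual argument in \cite{LeW24} resolves precisely this tension by a careful, coordinated choice of witnesses and of subpaths (exploiting the monotonicity of $d(\cdot,s_{i_j})-d(\cdot,s_0)$ toward both $s_{i_j}$ and $s_0$) rather than by a generic exchange/minimality argument; as written, your sketch restates the required combinatorial core rather than establishing it, so it does not yet constitute a proof.
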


We adapt the construction of additive core-sets from~\cite{li2019planar} to rely on the set system ${\cal \widehat{LP}}$.

\lemCoreSet*
\begin{proof}
    Let $D \ =\diam(G)$, $\delta = \eps\cdot D/2$, and $z = \lceil 2/\eps \rceil$.
    Next, we choose $M = \{j\cdot \delta \colon j \in [-z,z]\}$.
    We will extend \Cref{def:vc} slightly.
    For vertices $u,v$ we define $d'(u,v)$ as the largest integer $j$ for which $j\cdot\delta \le d_G(u,v)$.
    Note that $d'(u,v)$ is always bounded by $z$.
    For a vertex $v$ let ${\hat X}^\circ_v$ be the pair $(d'(v,s_0), {\hat X}_v)$.
    By \Cref{lem:vc-bound} and \Cref{lem:size-bound}, the family $\{{\hat X}_v \colon v \in V(G)\}$ has size $\Oh\br{((k-1)\cdot |M|)^{h-1}} = \Oh\br{(k/ \eps)^{h-1}}$.
    It follows that the family $\{{\hat X}^\circ_v \colon v \in V(G)\}$ has  $\Oh\br{k^{h-1} \eps^{-h}}$ elements. 

    We construct $U'$ by picking for each $X \in \{{\hat X}^\circ_v \colon v \in V(G)\}$ an arbitrary vertex $v \in U$ with ${\hat X}^\circ_v = X$.
    Then $|U'| = \Oh\br{k^{h-1} \eps^{-h}}$.
    To show that $U'$ is an $(\eps D)$-additive core-set, pick $u \in U$.
    We need to prove that there exists $u' \in U'$ such that $\ell_\infty(\Phi_S(u) - \Phi_S(u')) \le \eps D$.
    By construction, there is $u' \in U'$ for which ${\hat X}^\circ_u = {\hat X}^\circ_{u'}$.
    Now we need to argue that for each $i \in [0,k-1]$ it holds that $|d_G(u,s_i) - d_G(u',s_i)| \le \eps D$. 

    The simplest case is for $i = 0$.
    As $d'(u,s_0) = d'(u',s_0)$, there is an integer $j$ such that both $d_G(u,s_0),\, d_G(u',s_0)$ belong to the interval $[j\cdot\delta,\, (j+1)\cdot\delta)$.
    Hence $|d_G(u,s_0) - d_G(u',s_0)| \le \delta = \eps D/2$. 

    Now consider $i \ge 1$.
    Let $j$ be the largest integer (possibly negative) for which $d_G(u,s_i) - d_G(u,s_0) \le j \cdot \delta$.
    We define $j'$ analogously with respect to $u'$.
    Note that $j,j' \in [-z,z]$. 
    Because ${\hat X}_u = {\hat X}_{u'}$ it must be $j=j'$.
    By the same argument as before, the numbers $x = d_G(u,s_i) - d_G(u,s_0)$, $y = d_G(u',s_i) - d_G(u',s_0)$ differ by at most $\delta$.
    Finally, since $|d_G(u,s_0) - d_G(u',s_0)| \le \delta$ we obtain that $|d_G(u,s_i) - d_G(u',s_i)| \le 2\delta = \eps D$.
\end{proof}

In our setting we will only know approximations of the distances from $S$.
We show that this suffices to construct an additive core-set.

\begin{restatable}{lemma}{lemCoreApx}
\label{lem:metric:apx-cover}
    Let $G$ be an edge-weighted $K_h$-minor-free graph $G$, $S = (s_0,\dots,s_{k-1}) \in V(G)^k$, and $U \sub V(G)$.
    Let $\delta$ be given and denote $\eps = \delta / \diam(G)$.
    Next, suppose that for each $v \in U$ we are given ${\bf y}_v \in \rr^k$ that satisfies $\ell_\infty({\bf y}_v - \Phi_S(v)) \le \delta$. 
    Then in time $\Oh((k / \eps)^{h} \cdot |U|)$ we can compute a $(5\delta)$-additive core-set of $U$ of size $\Oh\br{k^{h-1} \eps^{-h}}$.
\end{restatable}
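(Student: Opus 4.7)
The plan is a straightforward greedy procedure that uses the given vectors ${\bf y}_v$ as proxies for the true distance tuples $\Phi_S(v)$, combined with the VC-based counting argument already packaged inside \Cref{lem:corest-exist} to control the output size. Concretely, initialize $U' := \emptyset$ and process the vertices of $U$ in any order; for each $v$, test whether some $u' \in U'$ already satisfies $\ell_\infty({\bf y}_v - {\bf y}_{u'}) \le 3\delta$, in which case skip $v$, and otherwise append $v$ to $U'$.

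\textbf{Correctness.} If $v$ is skipped because of $u' \in U'$, then using $\ell_\infty({\bf y}_u - \Phi_S(u)) \le \delta$ for $u \in \{v, u'\}$ together with the $\ell_\infty$-triangle inequality gives
\[
\ell_\infty(\Phi_S(v) - \Phi_S(u')) \;\le\; \delta + 3\delta + \delta \;=\; 5\delta,
\]
so $u'$ is a valid witness for $v$. Retained vertices witness themselves, which certifies that $U'$ is a $(5\delta)$-additive core-set of $U$.

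\textbf{Size.} Distinct $u_1, u_2 \in U'$ satisfy $\ell_\infty({\bf y}_{u_1} - {\bf y}_{u_2}) > 3\delta$ by the greedy rule, whence $\ell_\infty(\Phi_S(u_1) - \Phi_S(u_2)) > \delta$ by the same triangle inequality. Now run the classification from the proof of \Cref{lem:corest-exist} with its parameter set equal to $\eps$: via \Cref{lem:vc-bound} and \Cref{lem:size-bound} this partitions $V(G)$ into $\Oh(k^{h-1}\eps^{-h})$ classes such that any two same-class vertices $u_1', u_2'$ obey $\ell_\infty(\Phi_S(u_1') - \Phi_S(u_2')) \le \eps \cdot \diam(G) = \delta$. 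Hence no class can contain two elements of $U'$, so $|U'| = \Oh(k^{h-1}\eps^{-h})$.

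\textbf{Running time and main subtlety.} For each $v \in U$ we perform at most $|U'|$ comparisons of length-$k$ vectors, totalling $\Oh(k \cdot |U'| \cdot |U|) = \Oh((k/\eps)^h \cdot |U|)$ time as required. The only calibration to watch is the greedy threshold: $3\delta = \delta + 2\delta$ is tight, since it yields exactly the promised $5\delta$ additive error while still forcing true-tuple pairs within $U'$ to strictly exceed the $\delta$-diameter of the VC-induced classes. A looser threshold would break the covering guarantee, a tighter one would break the size bound, which is the main (and only) balancing point of the argument.
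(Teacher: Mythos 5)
Your proposal matches the paper's proof almost verbatim: the same greedy $3\delta$-separated construction on the proxy vectors ${\bf y}_v$, the same triangle-inequality argument yielding the $5\delta$ guarantee and the $>\delta$ separation of true tuples, and the same invocation of \Cref{lem:corest-exist} to bound $|U'|$. The only cosmetic difference is that you invoke \Cref{lem:corest-exist} at parameter $\eps$ and rely on the strict $>\delta$ separation, while the paper invokes it at $\eps/2$ so that balls of diameter $\delta$ suffice; both give the same asymptotic size bound.
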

\begin{proof}
    We first describe the algorithm, which is a simple greedy construction of a~$3\delta$-cover in a metric space.
    Initialize $\hat U = \emptyset$ and
    iterate over $v \in U$ in any fixed order.
    When processing $v$, check if $\hat U$ contains $u$ for which $\ell_\infty({\bf y}_v - {\bf y}_u) \le 3\delta$.
    If there is no such element, add $v$ to $\hat U$.

    First we show that every vector in $\{{\bf y}_v \colon v\in U\}$ is within distance $3\delta$  in the $\ell_\infty$ metric from the set
    $\{{\bf y}_v \colon v\in \hat U\}$. 
    Suppose otherwise and let $v \in U$ be a vertex whose vector ${\bf y}_v$ is further than $3\delta$ from this set. 
    In particular, this means that $\ell_\infty({\bf y}_v - {\bf y}_u) > 3\delta$ for each $u$ considered before $v$.
    So $v$ must have been added to $\hat U$ at this point; a contradiction.

    Now we argue that $\hat U$ is a $(5\delta)$-additive core-set.
    For each $u \in U$ we know that there exists $v \in \hat U$ such that $\ell_\infty({\bf y}_v - {\bf y}_u) \le 3\delta$.
    Next, by assumption $\ell_\infty({\bf y}_u - \Phi_S(u)) \le \delta$ and likewise for $v$.
    Then by the triangle inequality we get $\ell_\infty(\Phi_S(v) - \Phi_S(u)) \le 5\delta$, as intended.

    Next, we bound the size $\hat U$.
    By construction, for each  $u, v \in \hat U$
    we have $\ell_\infty({\bf y}_v - {\bf y}_u) > 3\delta$.
    We apply the triangle inequality again to get $\ell_\infty(\Phi_S(v) - \Phi_S(u)) > \delta$.
    By applying \Cref{lem:corest-exist} for $\eps' = \eps/2$ we obtain $U' \sub U$ so that the $(\delta/2)$-radius balls (in the $\ell_\infty$ metric) centered at $\{\Phi_S(v) \colon v\in U'\}$ cover the set $\{\Phi_S(v) \colon v\in U\}$.
    Each such ball has diameter $\delta$ so it can contain at most one element from the set $\{\Phi_S(v) \colon v\in \hat U\}$.
    This implies $|\hat U| \le |U'|$ and so the size bound follows from \Cref{lem:corest-exist}.
    
    Finally, we analyze the running time.
    When inspecting a vertex $u \in U$ we need to compute distances in $\rr^k$ between ${\bf x}_u$ and the vectors of all the vertices already placed in the core-set. 
    This requires at most $k \cdot |\hat U|$ operations and so the running time bound follows from the bound on $|\hat U|$.
\end{proof}

\section{Low Radius Clustering}
\label{sec:apex-free}

We begin with summoning the local treewidth property of apex-minor-free graphs.

\begin{lemma}[\cite{DemaineH04}]
\label{lem:local-property}
    For every apex graph $H$, there exists a constant $c$ such that for every unweighted $H$-minor-free graph $G$ it holds that $\tw(G) \le c \cdot \diam(G)$.
\end{lemma}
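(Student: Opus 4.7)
The plan is to reduce the lemma to a \emph{linear local treewidth} bound for apex-minor-free classes: for every vertex $v$ and every radius $r\ge 0$, the ball subgraph $G[B_r(v)]$ with $B_r(v)=\{u : d_G(v,u)\le r\}$ satisfies $\tw(G[B_r(v)])\le c\cdot r$. This immediately implies the lemma, because setting $r=\diam(G)$ makes $G[B_r(v)]=G$ for any fixed~$v$, so $\tw(G) \le c\cdot \diam(G)$.

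To obtain the linear local bound, I would first invoke Eppstein's theorem, which already guarantees some function $f_H$ with $\tw(G[B_r(v)])\le f_H(r)$ for every minor-closed class excluding an apex graph~$H$; the task then reduces to upgrading $f_H$ to be linear in~$r$. I would argue by contradiction via the Excluded Grid Theorem (Robertson--Seymour, or Chekuri--Chuzhoy's polynomial strengthening): assume $\tw(G[B_r(v)])\ge Cr$ for a sufficiently large constant $C=C(H)$. Then $G[B_r(v)]$ contains a $(k\times k)$-grid as a minor with $k\gg r+|V(H)|$. Inside this grid, carve out a central subregion $M_0$ whose branch sets realize $H-a$ (with $a$ the apex of $H$), arranged in a strip-like pattern so that each branch set corresponding to a neighbor of~$a$ in~$H$ reaches the boundary of $M_0$, and leave a surrounding buffer of width $\gg r$. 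The apex~$a$ is then played by~$v$: its branch set is $\{v\}$ together with the entire buffer and a single path of length $\le r$ in $G$ from~$v$ to the buffer, which exists because every buffer vertex lies in $B_r(v)$.

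The main obstacle is twofold: (i) routing $H-a$ inside $M_0$ so that every branch set adjacent to~$a$ in $H$ touches the boundary of $M_0$, and (ii) making the connecting path from~$v$ avoid $M_0$. Both are handled by the slack provided by the large grid, since $H-a$ is planar and can be realized in a strip-like minor pattern that puts all of its branch sets on the outer frame of $M_0$, and the buffer width greatly exceeds both~$r$ and $|V(H)|$ so any short path from~$v$ can be deflected around $M_0$ within the buffer. The resulting configuration witnesses~$H$ as a minor of~$G$, contradicting $H$-minor-freeness. This establishes the linear local treewidth bound, from which the lemma follows.
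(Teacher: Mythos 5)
The paper only cites this lemma from Demaine and Hajiaghayi~\cite{DemaineH04} without giving a proof; their argument proceeds via the Robertson--Seymour structure theorem for apex-minor-free graphs rather than via excluded grids, so your route is genuinely different. Your reduction to a linear local treewidth bound is fine, but the contradiction step has two gaps. First, from $\tw(G[B_r(v)])\ge Cr$ with $C$ a constant, neither the original Excluded Grid Theorem nor Chekuri--Chuzhoy's polynomial strengthening yields a grid of side $k\gg r$: the latter only produces a grid of side polynomially smaller than $\tw(G[B_r(v)])$, which cannot be $\Omega(r)$ for any fixed $C$. As stated you would obtain a polynomial local treewidth bound, not a linear one. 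To get $k=\Omega(r)$ you must instead invoke the \emph{linear} grid-minor theorem for $H$-minor-free graphs (Demaine--Hajiaghayi 2008, or Kawarabayashi--Kobayashi), legitimate here since $G[B_r(v)]$ is itself $H$-minor-free, but this is a substantially stronger tool than the ones you cite.

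Second, the claim that the connecting path from $v$ ``can be deflected around $M_0$ within the buffer'' because the buffer has grid-width $\gg r$ does not hold. A grid minor guarantees $G$-edges between grid-adjacent branch sets but places no restriction on $G$-edges between grid-distant ones, so a $G$-path of length at most $r$ starting at $v$ may jump from the outer rim of the grid straight into the middle of $M_0$ in a single step; no amount of buffer thickness prevents this. (It also does not rule out that $v$ itself lies inside a branch set of $M_0$.) The repair is to use the length bound on the path directly: it has at most $r+1$ vertices and hence meets at most $r+1$ branch sets; discard the at most $r+1$ rows and at most $r+1$ columns containing them, leaving a $(k-2r-2)\times(k-2r-2)$ sub-grid untouched by both the path and $v$. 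Since $k\gg r+|V(H)|$, this residual grid still accommodates $H-a$ with its branch sets reaching the boundary, and the branch set of $a$ can then be taken to be $\{v\}$ together with the path and the remaining grid, all disjoint from the $H-a$ branch sets. With these two fixes your argument goes through.
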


Our first tool is a clustering of a graph  into induced subgraphs of small strong diameter.  

\begin{definition}\label{def:clustering}
    For an edge-weighted graph $G$, we define a {\em $\delta$-radius clustering} of $G$ as a collection of pairs $(C_i,c_i)_{i=1}^\ell$ with the following properties.
    \begin{enumerate}
        \item $(C_i)_{i=1}^\ell$ forms a partition of $V(G)$.
        \item For each $i \in [\ell]$ it holds that $G[C_i]$ is connected, $c_i \in C_i$, and each $v \in C_i$ is within distance~$\delta$ from $c_i$ in $G[C_i]$.\label{item:def:strong}
    \end{enumerate}
    For ${\cal C} = (C_i,c_i)_{i=1}^\ell$ we denote by $G_\ccal$ the {\em cluster graph} obtained from $G$ by contracting each $C_i$ into a single vertex.
    The vertex $c_i$ is called the {\em center} of the $i$-th cluster.
\end{definition}

Note that each cluster in a  $\delta$-radius clustering has strong diameter at most $2\delta$.
As our first goal, we aim to construct a $\delta$-radius  clustering, for $\delta = \eps\cdot\diam(G)$, whose cluster graph has treewidth $\Oh(1/ \eps)$.
Secondly, even when $\delta$ is chosen to be much smaller $\diam(G)$ we would like to put vertices $u,v$, satisfying $d_G(u,v) = \Oh(\delta)$, into ``nearby'' clusters.

For a graph $G$, a {\em layering} is simply a function $\layer \colon V(G) \to \nn$.
When such a function is clear from the context, for $I \sub \nn$ we denote by $G^I$ the subgraph of $G$ induced by the vertices $v$ with $\layer(v) \in I$.
Given a clustering $\ccal$ of $G$ and a 
layering of $G_\ccal$, we naturally extend it to a layering of~$G$. 

\begin{figure}
\centering\includegraphics[width=13cm]{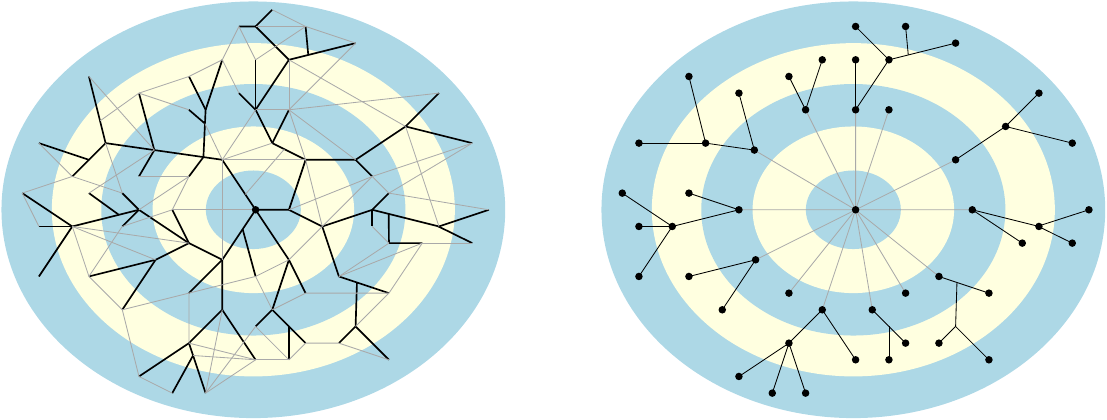}
\caption{Construction of the clustering in \Cref{lem:local-tw}.
Left: The black edges form a shortest-path tree $T$ rooted at $r$.
The alternating background colors represented the layers.
Each connected subgraph of $T$ contained within a single layer forms a  cluster.
Right: An argument that $G_\ccal^I$ has bounded treewidth for $I = \{2,3,4\}$.
For simplicity, the edges not originating from $T$ are omitted. 
After contracting the layers 0 and 1 into a single vertex, we obtain a graph of (unweighted) diameter at most 6.
Since this graph is apex-minor-free, we can use the local treewidth property.
}
\label{fig:cluster}
\end{figure}

\begin{restatable}[$\bigstar$]{lemma}{lemLocal}
\label{lem:local-tw}
    Let $H$ be an apex graph.
    Given a connected edge-weighted $n$-vertex $H$-minor-free graph $G$ and $\delta > 0$, we can in time $\Oh(n\log n)$ find a $\delta$-radius clustering $\ccal$ of $G$ and a layering function $\layer \colon V(G_\ccal) \to \nn$ such that the following hold.
    \begin{enumerate}
        \item For an interval $I \sub \nn$
        of size $p$, the graph $G^I_\ccal$ has treewidth $\Oh_H(p)$.\label{item:local:interval}
        \item The treewidth of $G_\ccal$ is bounded by $\Oh_H(\diam(G) / \delta)$.\label{item:local:tw}
        \item For each $u,v \in V(G)$ and $p \in \nn$ satisfying $d_G(u,v) \le \delta\cdot p$, it holds that $|\layer(u) - \layer(v)| \le p$. In this case, there is an interval $I \sub \nn$ depending only on $(\layer(u),\layer(v),p)$, such that $|I| \le 2p+1$ and the
        $(u,v)$-distance is the same in $G$ and~$G^I$. \label{item:local:distance}
    \end{enumerate}
\end{restatable}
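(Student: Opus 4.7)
The plan is to build $\ccal$ from a shortest-path tree $T$ rooted at an arbitrary vertex and layer vertices according to their rounded distance to the root. Running Dijkstra from some $r \in V(G)$ produces $d_G(r,\cdot)$ and $T$ in time $\Oh(n\log n)$ (using sparsity of $H$-minor-free graphs). Set $\layer(v) = \lfloor d_G(r,v)/\delta\rfloor$, and take the clusters $C_1,\dots,C_\ell$ to be the vertex sets of the connected components of $T$ restricted to each layer, with center $c_i$ the $T$-topmost vertex of $C_i$ (the one with smallest $d_G(r,\cdot)$). Every $v \in C_i$ is then a $T$-descendant of $c_i$ lying in the same layer, so the $T$-path from $v$ to $c_i$ stays inside $C_i$ and has length $d_G(r,v)-d_G(r,c_i)<\delta$; thus $\ccal$ is a $\delta$-radius clustering. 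Each cluster lies in a single layer of $G$, which we inherit as the layer of the corresponding vertex of $G_\ccal$.

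For item~3, fix $u,v$ with $d_G(u,v)\le \delta p$ and let $P$ be a shortest $u$--$v$ path. For any $w \in V(P)$ and any $x \in \{u,v\}$, the triangle inequality gives $|d_G(r,w)-d_G(r,x)|\le d_G(x,w)\le \delta p$, so $\layer(w) \in [\layer(x)-p,\layer(x)+p]$. Intersecting the bounds for $x=u$ and $x=v$ yields
\[
\layer(w) \in \bigl[\max(\layer(u),\layer(v))-p,\ \min(\layer(u),\layer(v))+p\bigr] =: I.
\]
This interval depends only on $(\layer(u),\layer(v),p)$ and has size $2p+1-|\layer(u)-\layer(v)|\le 2p+1$; in particular $|\layer(u)-\layer(v)|\le p$, $P$ is contained in $G^I$, and the $(u,v)$-distance is preserved in $G^I$.

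The main technical step is item~1 (item~2 then follows by taking $I = [0,\lfloor\diam(G)/\delta\rfloor]$). Given an interval $I$ of $p$ consecutive layers, form $G'$ from $G$ by deleting all vertices with $\layer>\max I$, contracting all vertices with $\layer<\min I$ into a single pseudo-root $r^\star$, and contracting every cluster of $\ccal$ that lies in $I$. Then $G'$ is a minor of $G$ (hence $H$-minor-free) and $G_\ccal^I = G'-r^\star$. The key observation is that $G'$ has bounded \emph{unweighted} diameter: the contracted tree $T_\ccal$ is a spanning tree of $G_\ccal$ in which each edge goes strictly from a lower-layer cluster to a higher-layer one, so following parents in $T_\ccal$ from any cluster in layer $a\in I$ reaches a cluster of layer $<\min I$ in at most $a-\min I+1\le p$ steps; that cluster becomes $r^\star$ in $G'$. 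Hence every vertex of $G'$ lies within $p$ edges of $r^\star$, $\diam(G')\le 2p$, and \Cref{lem:local-property} yields $\tw(G_\ccal^I)\le \tw(G')=\Oh_H(p)$.

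The main obstacle lies in item~1: a priori, individual $T$-edges can jump many layers, so the uncontracted tree restricted to $G^{[0,\max I]}$ may have large unweighted radius even when its metric radius is small. The resolution is to operate on the contracted tree $T_\ccal$, in which every surviving edge crosses at least one layer boundary, so the number of edges needed to escape $I$ is bounded by $|I|$ rather than by actual distances. The remaining details (Dijkstra, the extraction of the clustering from $T$, and propagating the treewidth bound through the minor relation) are standard.
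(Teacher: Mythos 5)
Your proposal is correct and takes essentially the same approach as the paper: a shortest-path tree rooted at an arbitrary vertex, layering by rounded root-distance, clusters as connected components of $T$ within each layer, and a contraction-plus-local-treewidth argument where the layers below $I$ become a pseudo-root. The only differences are cosmetic: the paper phrases the treewidth argument by claiming adjacency to a cluster exactly one layer lower (which, as you implicitly avoid, can fail when a tree edge is heavier than $\delta$ — strict decrease suffices), and the paper splits the $\min I = 0$ case explicitly, whereas you leave it implicit that the root cluster then serves as the hub.
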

\hspace{0.1cm}

The proof constructs the layering by analyzing a shortest-path tree $T$ rooted at an arbitrary vertex $r$.
Then the $j$-th layer corresponds to vertices whose distance from $r$ belongs to $[j\cdot\delta, \,j\cdot\delta + \delta)$.
The clustering $\ccal$ is obtained by contracting all the edges in $T$ connecting vertices sharing the same layer.

Clearly, each root-to-leaf path in $T$ traverses at most $\diam(G) / \delta$ layers, which bounds the (unweighted) diameter of $G_\ccal$.
Since $G_\ccal$ is a contraction of $G$, it is $H$-minor-free as well and we can take advantage of the local treewidth property.
A similar argument yields \Cref{item:local:interval} while \Cref{item:local:distance} follows from the triangle inequality.

\Cref{lem:local-property} only guarantees that $\tw(G_\ccal)$ can be bounded in terms of $\eps$, without providing an efficient way to construct a tree decomposition. 
Moreover, the known algorithms for computing optimal (or $\Oh(1)$-approximate) tree decomposition require time exponential in treewidth, whereas we aim at polynomial dependence on $\eps$.
Luckily, there is one algorithm available that suits our needs, with a bit worse width guarantee.

While several algorithms rely on dynamic programming over a tree decomposition to compute the diameter exactly~\cite{AbboudWW16, Eppstein00, Husfeldt16}, we cannot afford such an strategy due to prohibitive error accumulation.
Instead, we will turn the decomposition into a balanced one, 
again by increasing its width slightly, and use it to build our data structures. 

\begin{lemma}\label{lem:prelim:twd-compute}
    There is an algorithm that, given a graph $G$ of treewidth at most $k$, runs in time $\Oh(k^7\cdot n\log n)$ and outputs a binary rooted tree decomposition of $G$ of width $\Oh(k^2)$, depth $\Oh(\log n)$, and size $\Oh(n)$.
\end{lemma}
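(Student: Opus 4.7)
I would split the construction into two phases.

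\emph{First phase: a tree decomposition of width $\Oh(k^2)$.} The plan is to recursively find approximately balanced vertex separators. Since $\tw(G) \le k$, every vertex subset of $V(G)$ admits a $\tfrac{2}{3}$-balanced separator of size at most $k+1$. Computing such a separator exactly is $\mathsf{NP}$-hard, but a somewhat-balanced separator of size $\Oh(k^2)$ can be produced by $\Oh(k)$ max-flow computations on an auxiliary network (a standard flow-based balanced-separator routine: guess a target bag, use flow to test whether the graph is a $k$-cover of it, and round). Recursing on the components of the separator and placing the separator into the bag of the current node yields an arbitrary-arity tree decomposition of $G$ of width $\Oh(k^2)$ in total time $\Oh(k^{O(1)}\cdot n \log n)$, with the extra $\log n$ factor coming from the $\Oh(\log n)$ recursion depth induced by the $\tfrac{2}{3}$-balance of each separator.

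\emph{Second phase: balancing.} Apply the Bodlaender--Hagerup transformation to the decomposition from the first phase. That classical procedure takes any tree decomposition of width $w$ and size $m$ and produces, in $\Oh(m)$ time, an equivalent binary rooted tree decomposition of width at most $3w+2$, depth $\Oh(\log m)$, and size $\Oh(m)$. Running it preserves the $\Oh(k^2)$ width bound and delivers the required logarithmic depth, binary structure, and linear size; no new ideas are needed here.

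The main obstacle lies in the first phase: obtaining a tree decomposition of polynomial-in-$k$ width without an exponential dependence on $k$ in the running time. The FPT-style exact or constant-approximate treewidth algorithms (e.g.\ Korhonen's $2^{\Oh(k)}\cdot n$-time $2$-approximation) are unsuitable here since the $2^{\Oh(k)}$ factor would spoil the $\eps$-dependence in all downstream applications, where $k$ is already $\poly(1/\eps)$. One must therefore resort to a purely polynomial-time approximation, which degrades the width to something polynomial in $k$; this is exactly what the $\Oh(k^2)$ slack in the statement is meant to accommodate. A careful accounting of the cost of the balanced-separator subroutine on sparse graphs (at most $\Oh(k^{O(1)}\cdot n)$ per call) summed over $\Oh(\log n)$ recursion levels with $\Oh(n)$ total subproblem size per level then yields the claimed $\Oh(k^7\cdot n\log n)$ overall bound.
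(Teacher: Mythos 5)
Your two-phase plan matches the paper's proof, which simply cites Fomin, Lokshtanov, Pilipczuk, Saurabh, and Wrochna (2018) for a $\Oh(k^7 n\log n)$-time algorithm producing width $\Oh(k^2)$ (this is exactly the flow-based recursive separator scheme you sketch, with the $k^2$ slack coming from the $\Oh(k)$-approximate separators), and then cites a linear-time balancing step (Chatterjee et al.\ 2014, in the same spirit as Bodlaender--Hagerup) for the $\Oh(\log n)$ depth and binary structure. Your sketch of the first phase is correct in spirit but not a proof of the precise $k^7$ exponent; in the paper that phase is delegated entirely to the cited result.
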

\begin{proof}
    First, in time $\Oh(k^7\cdot n\log n)$ one can compute a tree decomposition $(T,\chi)$ of $G$ of width $k' = \Oh(k^2)$~\cite{FominLSPW18}.
    By a standard argument, we can modify $(T,\chi)$ to have size $\Oh(n)$.
    Next, we turn $(T,\chi)$ into a binary rooted tree decomposition of depth $\Oh(\log n)$ and width at most $4k' + 3$ in time $\Oh(n)$~\cite{chatterjee2014optimal}.
    This also preserves the bound $|V(T)| = \Oh(n)$.
\end{proof}

\section{Processing a Tree Decomposition}
\label{sec:tree}

We will need an  argument to reroute paths to go through a center $c_i$ of some cluster $C_i$.

\begin{lemma}\label{lem:path-touch}
    Let $G$ be an edge-weighted graph, $u,v,w \in V(G)$, and let $C$ be a vertex set contained in the $\delta$-radius ball around $w$.
    Suppose there is a shortest $(u,v)$-path $P$ in $G$ that intersects $C$.
    Then $d_G(u,w) + d_G(w,v) \le d_G(u,v) + 2\delta$.
\end{lemma}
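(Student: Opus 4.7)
The plan is to exploit the fact that $P$ is a shortest $(u,v)$-path and that $w$ is close (in $G$) to every vertex of $C$, including to the one where $P$ visits $C$.

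First I would fix an intersection vertex $x \in V(P) \cap C$, which exists by assumption. Because $P$ is a shortest $(u,v)$-path in $G$ and $x$ lies on $P$, the subpath from $u$ to $x$ along $P$ is a shortest $(u,x)$-path and similarly from $x$ to $v$, so we obtain the decomposition
\[
d_G(u,v) \;=\; d_G(u,x) + d_G(x,v).
\]
Next, since $x \in C$ and $C$ is contained in the $\delta$-radius ball around $w$, we have $d_G(x,w) \le \delta$.

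The final step is just two applications of the triangle inequality:
\[
d_G(u,w) \;\le\; d_G(u,x) + d_G(x,w) \;\le\; d_G(u,x) + \delta,
\]
\[
d_G(w,v) \;\le\; d_G(w,x) + d_G(x,v) \;\le\; \delta + d_G(x,v).
\]
Adding these and substituting the decomposition of $d_G(u,v)$ yields $d_G(u,w) + d_G(w,v) \le d_G(u,v) + 2\delta$, as required.

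There is no real obstacle here: the statement is essentially a two-line triangle-inequality argument, and the only thing one needs to be careful about is that the $\delta$-radius ball is measured with respect to $d_G$ (not, say, inside some subgraph), which is exactly what the hypothesis provides. The lemma's usefulness in the sequel will come from the fact that the $2\delta$ overhead is independent of where along $P$ the path meets $C$, which is precisely what we get from the shortest-path decomposition at~$x$.
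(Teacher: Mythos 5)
Your proof is correct and is essentially the paper's argument: decompose $d_G(u,v)$ along the shortest path at a point of $C$, then apply the triangle inequality through $w$ twice. The only (cosmetic) difference is that the paper picks the first and last intersection points $u',v'$ of $P$ with $C$ and uses $d_G(u,u')+d_G(v',v)\le d_G(u,v)$, whereas you use a single intersection point $x$ with $d_G(u,x)+d_G(x,v)=d_G(u,v)$; your version is marginally simpler and equally valid.
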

\begin{proof}
    Orient the path $P$ from $u$ to $v$ and define $u',v'$ as the first and the last vertices from $C$ appearing on $P$.
    Since $u',v'$ lie on a shortest $(u,v)$-path, we have $d_G(u,u') + d_G(v',v) \le d_G(u,v)$.
    Next, it holds that each of $d_G(w,u')$, $d_G(w,v')$ is at most $\delta$.
    Hence $d_G(u,w) \le d_G(u,u') + \delta$ and $d_G(w,v) \le d_G(v',v) + \delta$, which yields the desired inequality.
\end{proof}

We will now show how, using a tree decomposition of the cluster graph of width $\Oh((1/\eps)^2)$, to construct a distance oracle and estimate eccentricities.
We give a unified proof for both statements because 
the constructed oracle is directly applied to obtain the second result.  
The presented algorithm is arguably simpler than the recursive schemes based on shortest path separators since in the analysis we only need to manage $\Oh(1)$ rounding errors, rather than $\Oh(\log n)$.
We no longer need to assume that the excluded minor is an apex graph and this will allow us later to apply \Cref{thm:preprocessing} in a more general setting.

\begin{theorem}\label{thm:preprocessing}
    Suppose we are given an edge-weighted $n$-vertex $K_h$-minor-free graph $G$, $\delta > 0$, a $\delta$-radius clustering $\ccal$ of $G$, and a binary rooted tree decomposition $(T, \chi)$  of $G_{\cal C}$ of width~$w$, depth $\Oh(\log n)$, and size $\Oh(n)$.
    \begin{enumerate}
        \item In time $\Oh_h(w \cdot n \log n)$
        we can build a data structure that, when queried for $u, v \in V(G)$, outputs $x \in [d_G(u,v),\, d_G(u,v)+2\cdot\delta]$ in time $\Oh(w \cdot \log n)$.\label{item:thm:oracle}
        
        \item Assume additionally that $G$ is connected and let  $\tau = \diam(G) / \delta$. In time $\Oh_h(w^{h+1}\cdot \tau^{h} \cdot n \log^2 n)$ we can compute eccentricity of each vertex $v \in V(G)$ within additive error $16\cdot\delta$. Moreover, within the same time we can also output a vertex $u \in V(G)$ for which $d_G(v,u) \ge \ecc_G(v) - 16\cdot\delta$.\label{item:thm:ecce}
    \end{enumerate}
\end{theorem}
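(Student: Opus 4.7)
I plan to assign each vertex $v \in V(G)$ a home bag $t_v \in V(T)$ containing the $G_\ccal$-vertex of $v$'s cluster, and to store for each $v$ the distance $d_G(v, c_j)$ for every cluster $j$ appearing in a bag on the root-to-$t_v$ path --- $O(w \log n)$ entries per vertex and $O(wn \log n)$ overall. A query $(u, v)$ will scan every $j$ in the common ancestor bags of $t_u$ and $t_v$ and return $\min_j\, [d_G(u, c_j) + d_G(c_j, v)]$; the lower bound is the triangle inequality. For the upper bound I will argue that the shortest $(u,v)$-path visits clusters $C_{a_0}, \ldots, C_{a_k}$ whose per-cluster $T$-subtrees $T_{a_0}, \ldots, T_{a_k}$ share nodes consecutively, so their union is a connected subtree of $T$ containing $t_u$, $t_v$ and hence the LCA $t^*$; therefore some $a_i$ satisfies $a_i \in \chi(t^*)$, and Lemma~\ref{lem:path-touch} yields $d_G(u, c_{a_i}) + d_G(c_{a_i}, v) \le d_G(u,v) + 2\delta$. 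To compute the stored distances, I will traverse $T$ top-down; at each node $t$ I run a multi-source Dijkstra in the subgraph induced by $V_t := \{v : t_v \in T_t\}$ augmented with ``shortcut'' edges between the $\chi(t)$-centers whose weights come from the already-processed ancestor data, so that paths that would leave and re-enter $V_t$ are accounted for correctly. Using a near-linear SSSP routine on $K_h$-minor-free graphs, each Dijkstra costs $\Oh_h(|V_t|)$; with $\sum_t |V_t| = O(n \log n)$ and $w$ Dijkstras per bag, preprocessing totals $\Oh_h(wn \log n)$. Query time is $O(w \log n)$ by iterating the $O(\log n)$ ancestor bags.

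\textbf{Part 2 (Eccentricity).} For each $t \in V(T)$, I will build an additive core-set $U'_t \subseteq U_t$ with respect to the tuple $S_t := (c_j)_{j \in \chi(t)}$ at tolerance $\delta$. Lemma~\ref{lem:corest-exist} with $k = |S_t| = O(w)$ and $\eps = 1/\tau$ bounds $|U'_t| = \Oh_h(w^{h-1}\tau^h)$; the construction uses Lemma~\ref{lem:metric:apx-cover} fed with oracle distances from part~1. To estimate $\ecc(v)$, I walk from the root of $T$ down to $t_v$; at every step, for every sibling subtree $t'$ not containing $t_v$, I approximate
\[
  \max_{u \in U_{t'}} d_G(v, u) \;\approx\; \max_{u' \in U'_{t'}} \min_{j \in \chi(t)} \bigl[\tilde d_G(v, c_j) + \tilde d_G(c_j, u')\bigr],
\]
where $\tilde d_G$ is the oracle output. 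Since $\chi(t)$ separates $v$ from $U_{t'}$ in $G_\ccal$, Lemma~\ref{lem:path-touch} gives $d_G(v, u) = \min_j [d_G(v, c_j) + d_G(c_j, u)] \pm 2\delta$ for $u \in U_{t'}$; the core-set substitution contributes a further $O(\delta)$ and the oracle another $2\delta$, summing by careful accounting to at most $16\delta$. The argmax $u'$ supplies the required witness vertex. The running time is $O(n)$ vertices times $O(\log n)$ levels times $\Oh_h(w^h \tau^h)$ work per level, plus the core-set construction cost $\Oh_h((w\tau)^h n \log n)$ via Lemma~\ref{lem:metric:apx-cover}, totaling $\Oh_h(w^{h+1} \tau^h n \log^2 n)$.

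\textbf{The hard part} will be the preprocessing-time bound for the oracle: a full Dijkstra per center is too slow and a naive restriction to $V_t$ loses shortcuts through $\chi(t)$, so the shortcut-edge construction must be tuned so that each bag contributes only $\Oh_h(w \cdot |V_t|)$ work while still recovering true $d_G$-distances. The additive-error accounting in part 2 will also require tight bookkeeping across the $O(\log n)$ recursive levels so as not to exceed the claimed $16\delta$: portal routing, core-set approximation, and oracle slack each inject a constant multiple of $\delta$, and their combination must be verified level-by-level.
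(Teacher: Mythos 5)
Your high-level plan matches the paper's, but there is a genuine gap in Part~1 and an indexing error in Part~2.

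For Part~1 your correctness argument is fine, but it hinges on having the \emph{true} $d_G$-distances available, and you yourself flag the preprocessing as ``the hard part.'' The shortcut construction you propose does not deliver these distances: a shortest path from $c_j$ to some $v\in V_t$ may re-enter $V_t$ through a cluster $C_i$ with $i \in \chi(t)$ at a vertex \emph{other than} the center $c_i$, so center-to-center shortcut edges cannot recover $d_G$ exactly, and managing the resulting approximation at every level of the recursion would contaminate your clean ``triangle-inequality lower bound.'' The paper sidesteps this entirely by \emph{not} computing true $d_G$-distances: it stores, for each valid triple $(t,i,v)$ with $t=\loc(i)$ and $v\in V_t$, the distance $d_{G[V_t]}(v,c_i)$, computable by plain SSSP in the induced subgraph $G[V_t]$ (linear per call). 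These values can only overestimate $d_G(v,c_i)$, which is harmless for the lower bound, and for the upper bound the paper proves (\Cref{claim:inner-path}) that there is a node $t$ --- namely the topmost node among the $T$-subtrees of the clusters traversed by the shortest $(u,v)$-path --- for which the entire path lies inside $V_t$, so that the $G[V_t]$-distance coincides with the $G$-distance and \Cref{lem:path-touch} applies. This makes the preprocessing a straightforward $\Oh_h(w\, n\log n)$ with no shortcut machinery, and is the key insight your proposal is missing.

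For Part~2 the structure of your argument follows the paper, but the reference tuple for the core-set is off by one level: you build $U'_{t'}$ with respect to $\chi(t')$, whereas the separator you then invoke in the eccentricity estimate is $\chi(t)=\chi(\parent(t'))$. The paper builds the core-set of $V_{t'}$ with respect to $\chi(\parent(t'))$ precisely so that \Cref{claim:separator} and \Cref{claim:corset-analysis} line up; with your indexing the portal argument does not go through. Also note that your stated tolerance ``$\delta$'' is not achievable once the oracle contributes a $2\delta$ one-sided error: feeding the oracle outputs to \Cref{lem:metric:apx-cover} with $\delta'=2\delta$ gives a $10\delta$-additive core-set, and it is $10\delta+6\delta=16\delta$ (core-set plus two oracle calls plus \Cref{lem:path-touch}) that yields the claimed bound, matching the paper's \Cref{claim:corset-analysis}. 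Finally, you should handle the three cases for the location of the eccentric vertex relative to $\loc(v)$ (ancestor, descendant, incomparable) explicitly, as the paper does in \Cref{claim:ecce}; only the last case uses the cross-subtree core-set, and the other two need a different (cheaper) argument.
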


Let us clarify that the parameter $\tau$ does not have to be given and it only appears in the analysis.
However, we can always assume that the diameter is known within factor 2 after computing any eccentricity.

\begin{proof}
    We begin with some notation that will be used throughout the proof.
    We will use variables $u,v$ to denote vertices of $G$ and variables $i,j$ to index the set of clusters. 
    For a cluster $i \in \ccal$, let $\loc(i)$ denote the node $t \in V(T)$ that is closest to the root among those satisfying $i \in \chi(t)$.
    We naturally extend this notation to vertices of $G$:
    all vertices $v \in C_i$ receive $\loc(v) = \loc(i)$.
    This mapping can be computed for all clusters and vertices in linear time by scanning $T$ top-down.

    For $t \in V(T)$ let $U_t \sub \ccal$ denote the set of clusters $i \in \ccal$ for which $\loc(i)$ lies in the subtree of $T$ rooted at $t$, inclusively.
    Next, let $V_t = \bigcup_{i \in U_t} C_i$.
    

    \begin{claim}\label{claim:sum-vt}
        It holds that $\sum_{t \in V(T)} |V_t| = \Oh(n\log n)$, and all the sets $V_t$ can be computed in total time $\Oh(n\log n)$.
    \end{claim}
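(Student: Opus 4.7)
The plan is to prove the size bound by double counting and then describe a straightforward bottom-up traversal of $T$ that constructs all the sets $V_t$ within the same time budget.

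First I would observe that each cluster $i \in \ccal$ belongs to $U_t$ if and only if $t$ is an ancestor of $\loc(i)$ in $T$ (taking $\loc(i)$ itself to be an ancestor). Since $T$ has depth $\Oh(\log n)$, each cluster contributes to $\Oh(\log n)$ sets $U_t$, and therefore each vertex $v \in V(G)$, which lies in exactly one cluster $C_i$, contributes to $\Oh(\log n)$ sets $V_t$. Summing over all vertices yields
\[
\sum_{t \in V(T)} |V_t| \;=\; \sum_{v \in V(G)} \bigl|\{t : v \in V_t\}\bigr| \;=\; \sum_{i \in \ccal} |C_i| \cdot \mathrm{depth}(\loc(i)) \;=\; \Oh(n \log n),
\]
since $\sum_{i} |C_i| = n$.

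Next, to compute the sets $V_t$ efficiently, I would first compute $\loc(i)$ for every cluster $i$ in linear time (a top-down pass, as already noted in the proof). Then I would bucket the clusters by their $\loc$-value, so that each node $t \in V(T)$ has direct access to the list $L_t = \{i \in \ccal : \loc(i) = t\}$. Finally I would process $T$ bottom-up: for a leaf $t$, set $V_t = \bigcup_{i \in L_t} C_i$; for an internal node $t$ with children $t_1, t_2$, concatenate the already-built lists $V_{t_1}$ and $V_{t_2}$ together with $\bigcup_{i \in L_t} C_i$. Representing each $V_t$ as a concatenable list (e.g.\ an array built by appending) avoids any overhead from duplicate checks, because a cluster's $\loc$ value is unique and the subtrees rooted at $t_1$ and $t_2$ partition the clusters in $U_t \setminus L_t$. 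The total work is proportional to $\sum_t |V_t| = \Oh(n \log n)$.

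The only potential obstacle is making sure that the concatenation step does not inflate the running time: if one were to treat $V_t$ as a set and perform membership checks or sorting at every node, one could easily incur an extra logarithmic factor. I would avoid this by committing to a pure append-based representation and relying on the ancestor-counting argument above to guarantee that each vertex is written out $\Oh(\log n)$ times in total. Both the size bound and the construction then fit inside $\Oh(n \log n)$.
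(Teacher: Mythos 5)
Your proof is correct and follows the same double-counting argument as the paper: each cluster $i$ contributes to only the $\Oh(\log n)$ sets $U_t$ on the $\loc(i)$-to-root path, which is exactly how the paper derives the $\Oh(n\log n)$ bound. You spell out the append-based bottom-up construction more explicitly than the paper (which just notes the argument "leads directly to an algorithm"), but this is the same plan.
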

    \begin{claimproof}
        For each $i \in \ccal$ the nodes $t$ satisfying $i \in U_t$ must be lie on the $\loc(i)$-to-root path in $T$ and so there are $\Oh(\log n)$ of them.
        Hence each $i \in \ccal$ contributes $\Oh(|C_i| \cdot \log n)$ to the sum $\sum_{t \in V(T)} |V_t|$, which yields the upper bound as $\ccal$ is a partition of $|V(G)|$. This argument leads directly to an algorithm computing the sets $V_t$.

    \end{claimproof}


\subparagraph*{Distance oracle.}
    Consider $t \in V(T)$, $i \in \cal C$, and $v \in V(G)$.
    We call $(t,i,v)$ a {\em valid triple} when $t = \loc(i)$ and  $v \in V_t$.
    For fixed $t\in V(T)$, the number of valid triples in which $t$ appears is at most $(w+1) \cdot |V_t|$, and so by \Cref{claim:sum-vt}
    the number of valid triples is $\Oh(w\cdot n\log n)$.

    Recall that for $i \in \ccal$ the vertex $c_i \in C_i$ is the center of cluster $i$.
    We will populate table $\inner$ indexed by valid triples, so that $\inner[t,i,v]$ will store the exact $(v,c_i)$-distance within the graph $G[V_t]$.
    If there is no such path, we  set $\inner[t,i,v] = \infty$.
    
    \begin{claim}\label{claim:inner-compute}
        We can compute $\inner[t,i,v]$ for all valid triples
        in total time $\Oh_h(w\cdot n\log n)$.
    \end{claim}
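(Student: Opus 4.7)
The plan is, for each node $t$ of the tree decomposition and each cluster $i$ with $\loc(i) = t$, to compute $\inner[t, i, v]$ for all $v \in V_t$ by running a single-source shortest-path computation from $c_i$ restricted to the induced subgraph $G[V_t]$. By the definition of $\loc$, every cluster $i$ is sourced at a unique node $t = \loc(i)$, and the containment $C_i \subseteq V_t$ ensures $c_i$ lies in $V_t$, so each valid triple $(t, i, v)$ is produced exactly once and with the value demanded by the definition. Vertices in $V_t$ unreachable from $c_i$ inside $G[V_t]$ receive the sentinel $\infty$.

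For the runtime, I would first bound the number of sources launched at each bag. Any cluster $i$ with $\loc(i) = t$ must satisfy $i \in \chi(t)$, so at most $w+1$ SSSP calls originate at node $t$. Each of them runs in the $K_h$-minor-free subgraph $G[V_t]$, which is $\Oh_h(|V_t|)$-edge-sparse by the excluded-minor hypothesis, and a linear-time SSSP routine for sparse minor-free graphs executes in $\Oh_h(|V_t|)$ time per call. Combining these two bounds with the estimate $\sum_t |V_t| = \Oh(n\log n)$ from \Cref{claim:sum-vt}, the total work is bounded by
\[
\sum_{t \in V(T)} (w+1)\cdot \Oh_h(|V_t|) \;=\; \Oh_h\bigl(w \cdot n \log n\bigr),
\]
matching the target of the claim. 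Accessing the sets $V_t$ in order to launch the Dijkstras is already paid for by \Cref{claim:sum-vt} and adds no overhead.

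The main obstacle is precisely the shortest-path subroutine: a plain binary-heap Dijkstra on each $G[V_t]$ costs $\Oh_h(|V_t| \log n)$ and inflates the final bound to $\Oh_h(w \cdot n \log^2 n)$. To hit the stated bound one must appeal to a genuinely linear-time SSSP for sparse $K_h$-minor-free graphs, since the amortisation through $\sum_t |V_t| = \Oh(n\log n)$ already consumes one logarithm. The remaining bookkeeping — storing $\inner$ in a table indexed by valid triples and iterating the Dijkstras in any top-down order of $T$ — is routine and fits within the same budget.
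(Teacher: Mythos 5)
Your proposal is correct and matches the paper's own argument: both iterate over nodes $t$ and clusters $i$ with $\loc(i)=t$, run the linear-time SSSP for $K_h$-minor-free graphs (the paper cites Tazari--Müller-Hannemann) from $c_i$ inside $G[V_t]$, and amortize via $\sum_t |V_t| = \Oh(n\log n)$ together with the $(w+1)$ bound on clusters sourced per bag. You also correctly flag that a heap-based Dijkstra would lose a logarithm, which is exactly why the paper needs the linear SSSP routine.
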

    \begin{claimproof}
        Consider $t \in V(T)$.
        For each $i \in \ccal$ with $\log(i) = t$, we compute all distances from $c_i$ in $G[V_t]$ using the linear single-source shortest path algorithm for $K_h$-minor-free graphs~\cite{TazariM09}.
        The total running time, over all $t \in V(T)$, is proportional to the number of valid triples.
    \end{claimproof}

    We now argue that information gathered in table $\inner$ is sufficient to retrieve all distances in $G$, up to small additive error.
    This is the only place where we rely on the fact that each cluster has small {\em strong} diameter.

\begin{figure}
\centering\includegraphics[width=12cm]{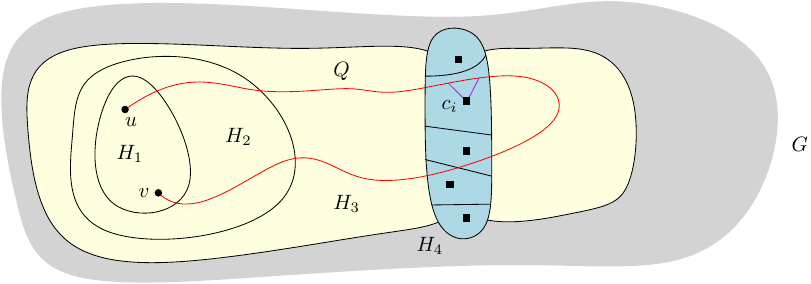}
\caption{Illustration to \Cref{claim:inner-path}.
By inspecting the nodes $t \in V(T)$ that are ancestral to both $\loc(u),\loc(v)$, we obtain a family of subgraphs $H_1 \subset H_2 \subset H_3 \subset H_4 \subset H_5 = G$, induced by the sets $V_{t}$.
Here, $H_4$ is the smallest among them that accommodates the shortest $(u,v)$-path $Q$. This path must intersect some cluster $C_i$ for which $\loc(i) = t$, where $G[V_t] = H_4$, as otherwise it would be contained in $H_3$.
Hence the $(u,v)$-distance in $H_4$ is the same as in $G$ and we can stretch $Q$ within $H_4$ to go through $c_i$ using \Cref{lem:path-touch}.
}
\label{fig:oracle}
\end{figure}

    \begin{claim}\label{claim:inner-path}
        For each $u, v \in V(G)$
        there exist $t \in V(T)$ and $i \in \ccal$ such that:
        \begin{enumerate}
            \item $(t,i,u)$ and $(t,i,v)$ are valid triples, \label{item:inner-path:valid}
            \item $\inner[t,i,u] + \inner[t,i,v] \le d_G(u,v) + 2\delta$. \label{item:inner-path:dist}
        \end{enumerate}
    \end{claim}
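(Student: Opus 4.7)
Fix $u,v\in V(G)$ and let $P$ be any shortest $(u,v)$-path in $G$. I will take $t\in V(T)$ to be the deepest ancestor for which $V_t\supseteq V(P)$; such $t$ exists because $V_r=V(G)$ at the root $r$ of $T$ and the sets $V_{t'}$ are nested along any root-to-leaf path of $T$. Then I will pick a cluster $i$ with $\loc(i)=t$ whose $C_i$ meets $V(P)$, as hinted by Figure~\ref{fig:oracle}, and verify the two required conditions.

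The main technical step is showing that such a cluster $i$ always exists; this is where I expect the delicate bookkeeping. Suppose $t$ has two children $t_1,t_2$ in $T$ (the one-child and leaf cases are strictly easier, because in those cases a vertex of $V(P)\setminus V_{t_1}$, respectively any vertex of $V(P)$, must lie in a cluster with $\loc=t$). First I would note that $V_{t_1}$ and $V_{t_2}$ are disjoint: if some cluster $i'$ belonged to both $U_{t_1}$ and $U_{t_2}$, then $\loc(i')$ would have to lie simultaneously in both children's subtrees, which is impossible. Consequently $V_t$ decomposes as the disjoint union of $V_{t_1}$, $V_{t_2}$, and the clusters $C_{i'}$ with $\loc(i')=t$. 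Arguing by contradiction, suppose none of the latter clusters meets $V(P)$. Then $V(P)$ is partitioned between $V_{t_1}$ and $V_{t_2}$, and by minimality of $t$ both parts are nonempty, so some $P$-edge $v_av_b$ crosses from $V_{t_1}$ to $V_{t_2}$. Its endpoints cannot lie in a common cluster (same disjointness reason), so they belong to distinct clusters $i_a\ne i_b$ that form an edge of $G_\ccal$ and must therefore share a bag of $(T,\chi)$. But the bags containing $i_a$ form a subtree rooted at $\loc(i_a)$, entirely inside the subtree of $t_1$, while those containing $i_b$ lie inside the subtree of $t_2$; these two subtrees are disjoint, yielding the contradiction.

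Once a valid pair $(t,i)$ is in hand, condition~\ref{item:inner-path:valid} is immediate, since $t=\loc(i)$ by construction and $u,v\in V(P)\subseteq V_t$. For condition~\ref{item:inner-path:dist}, the path $P\subseteq V_t$ is also a shortest $(u,v)$-path inside $G[V_t]$, so $d_{G[V_t]}(u,v)=d_G(u,v)$, and by choice $P$ intersects $C_i$. Because $C_i$ is covered by the $\delta$-radius ball around $c_i$ in $G[C_i]\subseteq G[V_t]$ by Item~\ref{item:def:strong} of Definition~\ref{def:clustering}, I may apply \Cref{lem:path-touch} inside the subgraph $G[V_t]$ to conclude
\[
d_{G[V_t]}(u,c_i)+d_{G[V_t]}(c_i,v)\;\le\;d_{G[V_t]}(u,v)+2\delta\;=\;d_G(u,v)+2\delta.
\]
The left-hand side equals $\inner[t,i,u]+\inner[t,i,v]$ by the definition of $\inner$, which closes the claim.
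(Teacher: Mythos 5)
Your proof is correct, and it chooses the same pair $(t,i)$ as the paper's proof and finishes the same way by applying \Cref{lem:path-touch} inside $G[V_t]$. The difference is in how you establish the existence of a cluster $i\in\chi(t)$ with $\loc(i)=t$ that meets $V(P)$. The paper argues directly: it observes that the clusters $\ccal_Q$ hit by the shortest path $Q$ induce a connected subgraph of $G_\ccal$ (since $G_\ccal$ is a contraction of $G$), hence the nodes of $T$ whose bags contain a cluster of $\ccal_Q$ form a connected subtree, and the topmost node $t$ of that subtree must satisfy $t=\loc(i)$ for some $i\in\ccal_Q$. You instead fix $t$ as the deepest node with $V_t\supseteq V(P)$ and run a contradiction argument using the disjointness of $V_{t_1},V_{t_2}$ together with the tree-decomposition edge-coverage property: a crossing edge $v_av_b$ of $P$ would force the clusters of $v_a$ and $v_b$ to share a bag, which cannot happen if their $\loc$-subtrees are confined to disjoint children subtrees. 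Both arguments are sound, use the same structural facts (partition into clusters, $G_\ccal$ being a minor of $G$, the tree-decomposition axioms), and yield the identical conclusion; the paper's is more constructive, while yours is by exclusion — a reasonable trade-off with no loss of correctness.
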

    \begin{claimproof}
        Let $Q$ be a shortest $(u,v)$-path in $G$.
        Next, let $\ccal_Q \sub \ccal$ denote the set of clusters intersected by $Q$ and let $T_Q \sub V(T)$ be the set of nodes in which $\ccal_Q$ appear, i.e., $T_Q = \chi^{-1}(\ccal_Q)$.
        Since $G_\ccal$ is a contraction of $G$, the subgraph $G_\ccal[\ccal_Q]$ is connected.
        Consequently, $T_Q$ induces a connected subtree of $T$.
        Let $t \in T_Q$ be the node that is closest to the root among $T_Q$.
        Note that $t = \loc(i)$ for some $i \in \ccal_Q$.

        We claim that the pair $(t,i)$ satisfies the requested conditions.
        Note that for each $j \in \ccal_Q$ the node $\loc(j)$ lies in the subtree of $T$ rooted at $t$.
        Hence $\ccal_Q \sub U_t$ and so $V(Q) \sub V_t$.
        In particular, we obtain that $u,v \in V_t$ 
        so $(t,i,u)$, $(t,i,v)$ are valid triples (\Cref{item:inner-path:valid}).

        Since $Q$ is a shortest $(u,v)$-path and $V(Q) \sub V_t$, we infer that the $(u,v)$-distance in $G[V_t]$ equals their distance in $G$.
        Now we apply \Cref{lem:path-touch} to the graph $H = G[V_t]$ and $C = C_i,\, w = c_i$.
        Because $i \in \ccal_Q$, we know that $Q$ intersects $C_i$.
        It is crucial that all $C_i$-vertices are within distance $\delta$ from $c_i$ when considering distances within $G[C_i]$ (\Cref{def:clustering}(\ref{item:def:strong})), not just in $G$.
        As $C_i \sub V_t$, this property carries over to the graph $H$.
        \Cref{lem:path-touch} implies that $d_H(u,c_i) + d_H(c_i,v) \le d_H(u,v) + 2\delta$.
        We have $\inner(t,i,u) = d_H(u,c_i)$, $\inner(t,i,v) = d_H(c_i,v)$, and $d_H(u,v) = d_G(u,v)$, which yields \Cref{item:inner-path:dist}.     
    \end{claimproof}

    \begin{claim}[Oracle]\label{claim:oracle}
        Let $u, v \in V(G)$ be given. 
        Using the table $\inner$ we can output $x \in [d_G(u,v),\, d_G(u,v) + 2\delta]$ in time
         $\Oh(w\cdot \log n)$.
    \end{claim}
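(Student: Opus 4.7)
The plan is to use Claim~\ref{claim:inner-path} as a correctness guarantee and then argue that only a small number of candidate pairs $(t,i)$ need to be examined in order to find one that realizes the inequality in that claim. Concretely, my output on a query $(u,v)$ will be
\[
x = \min_{(t,i)} \bigl(\inner[t,i,u] + \inner[t,i,v]\bigr),
\]
where the minimum ranges over pairs such that both $(t,i,u)$ and $(t,i,v)$ are valid triples.

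The first step is to characterize which pairs $(t,i)$ need to be considered. Validity of $(t,i,u)$ requires $t=\loc(i)$ and $u\in V_t$, and likewise for $v$. Since $V_t$ is exactly the set of vertices $w$ whose $\loc(w)$ lies in the subtree of $T$ rooted at $t$, the condition $\{u,v\}\subseteq V_t$ is equivalent to $t$ being an ancestor (inclusive) of both $\loc(u)$ and $\loc(v)$ in $T$. Equivalently, $t$ lies on the path from $\mathrm{LCA}(\loc(u),\loc(v))$ up to the root. Using the fact that $T$ has depth $\Oh(\log n)$, there are only $\Oh(\log n)$ candidate nodes $t$, and for each such $t$ the set $\{i\in\ccal : \loc(i)=t\}$ has cardinality at most $|\chi(t)|\le w+1$ (since $\loc(i)=t$ forces $i\in\chi(t)$). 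So there are only $\Oh(w\log n)$ candidate pairs overall, and each contributes an $\Oh(1)$ table lookup. The LCA itself can be computed in $\Oh(\log n)$ by walking $\loc(u)$ and $\loc(v)$ towards the root in lockstep (or by standard preprocessing in $\Oh(n)$), so the query time bound of $\Oh(w\log n)$ follows.

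For correctness, I will verify both inequalities. The upper bound $x\le d_G(u,v)+2\delta$ is immediate from Claim~\ref{claim:inner-path}: there exists at least one valid pair $(t,i)$ witnessing $\inner[t,i,u]+\inner[t,i,v]\le d_G(u,v)+2\delta$, and our minimum is no larger than this quantity. The lower bound $x\ge d_G(u,v)$ holds for every candidate pair individually: by definition $\inner[t,i,u]=d_{G[V_t]}(u,c_i)\ge d_G(u,c_i)$ and similarly $\inner[t,i,v]\ge d_G(c_i,v)$, so the triangle inequality in $G$ gives $\inner[t,i,u]+\inner[t,i,v]\ge d_G(u,v)$. Taking the minimum preserves this lower bound.

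The only place where I expect to have to be careful is the handling of the infinity case and the bookkeeping needed to enumerate $\{i\in\chi(t):\loc(i)=t\}$ in time proportional to its size rather than $|\chi(t)|$ at each node. This is easily arranged by storing, at each node $t\in V(T)$, the list of clusters $i$ with $\loc(i)=t$, which can be built during the same top-down scan of $T$ that already computes the $\loc$ mapping, with no asymptotic overhead. Once this list is available, the enumeration over candidate $(t,i)$ is direct and the running time bound follows.
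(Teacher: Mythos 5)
Your proposal is correct and follows the same route as the paper: enumerate the $\Oh(\log n)$ ancestors $t$ of $\loc(u),\loc(v)$, the $\Oh(w)$ clusters with $\loc(i)=t$ in each bag, return the minimum of $\inner[t,i,u]+\inner[t,i,v]$, with the upper bound supplied by Claim~\ref{claim:inner-path} and the lower bound by monotonicity of distances (the paper phrases the latter as the sum being the length of a $(u,v)$-walk in $G[V_t]$, you phrase it via the triangle inequality in $G$ — these are interchangeable). Your extra remarks about the LCA characterization and storing per-node cluster lists are correct implementation details that the paper leaves implicit.
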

    \begin{claimproof}
        We iterate over all pairs $t\in V(T),\, i \in \ccal$ for which $(t,i,u)$ and $(t,i,v)$ are valid triples. 
        There are only $\Oh(\log n)$ candidates for $t$ just because $(t,i,u)$ is a valid triple and so $t$ must lie on the $\loc(u)$-to-root path.
        For each such $t$ there are clearly at most $w+1$ candidates for $i$.

        We return $x$ equal to the minimum value of $\inner[t,i,u] + \inner[t,i,v]$ taken over all such pairs $(t,i)$.
        For each $(t,i)$ the considered sum is either $\infty$ or it equals the length of some $(u,v)$-walk in $G[V_t]$.
        This implies that $x \ge d_G(u,v)$.
        The inequality $x \le d_G(u,v) + 2\delta$ follows from
        \Cref{claim:inner-path}.
    \end{claimproof}

Claims \ref{claim:inner-compute} and \ref{claim:oracle} yield the additive distance oracle (\Cref{item:thm:oracle}).
Now we move on to a proof sketch for \Cref{item:thm:ecce}, whose full proof is postponed to \Cref{app:ecce}.

\begin{figure}
\centering\includegraphics[width=12cm]{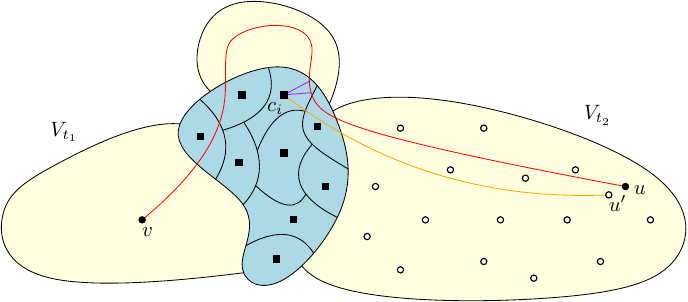}
\caption{Illustration to \Cref{claim:corset-analysis}.
The blue area corresponds to the union of clusters $C_i$ over $i \in \chi(t)$.
For any $v \in V_{t_1}$, $u \in V_{t_2}$ the shortest $(v,u)$-path (red) must traverse some $C_i$ so by \Cref{lem:path-touch} there is a slightly longer path that goes through $c_i$ (purple).
The hollow discs represent the additive core-set  of $V_{t_2}$ with respect to the centers from $\chi(t)$.
This core-set contains a vertex $u'$ which is comparable to $u$ with respect to distances from  $\{c_i \colon i \in \chi(t)\}$ (e.g., the orange path).
Hence $u'$ can simulate $u$ for the task of estimating the eccentricity of~$v$.
}
\label{fig:coreset}
\end{figure}

\subparagraph*{Eccentricities.} 
For a non-root $t \in V(T)$ we refer to its parent in $T$ as $\parent(t)$.
For $u,v \in V(G)$ let $\apxdist(u,v)$ denote the value returned by the oracle from \Cref{claim:oracle} when queried for the pair $u,v$. 
This is a well-defined function because the oracle is deterministic.
Recall the concept of an additive core-set from \Cref{lem:corest-exist} and that $\tau = \diam(G) / \delta$.

\begin{restatable}[$\bigstar$]{claim}{claimCoreset}
\label{claim:corset-compute}
        In total time $\Oh\br{(w\cdot\tau)^h \cdot n + w^2 \cdot n\log^2 n}$
        we can compute for each non-root $t \in V(T)$ a~subset $V'_t \sub V_t$ of size $\Oh\br{w^{h-1}\tau^{h}}$ that is a $(10\cdot\delta)$-additive core-set of $V_t$ with respect to $\chi(\parent(t))$.
\end{restatable}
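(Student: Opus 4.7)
The plan is to apply \Cref{lem:metric:apx-cover} independently at each non-root $t \in V(T)$, with $U := V_t$, source tuple $S_t := (c_i)_{i \in \chi(\parent(t))}$, and $k := |\chi(\parent(t))| \le w+1$. Taking the error parameter $\delta' := 2\delta$, so that $\eps' := \delta'/\diam(G) = 2/\tau$, the lemma delivers a $(5\delta') = (10\delta)$-additive core-set of size $\Oh\br{k^{h-1}(\eps')^{-h}} = \Oh\br{w^{h-1}\tau^h}$, exactly matching the claimed bound.

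To invoke the lemma I first need, for every $v \in V_t$ and every $i \in \chi(\parent(t))$, a coordinate $\mathbf{y}_v[i]$ that approximates $d_G(v, c_i)$ within additive error $\delta' = 2\delta$. The approximation is supplied directly by the oracle of \Cref{claim:oracle}: setting $\mathbf{y}_v[i] := \apxdist(v, c_i)$ we have $\apxdist(v, c_i) \in [d_G(v, c_i),\, d_G(v, c_i) + 2\delta]$, and therefore the vector $\mathbf{y}_v \in \rr^k$ satisfies $\ell_\infty(\mathbf{y}_v - \Phi_{S_t}(v)) \le 2\delta = \delta'$. Feeding $(\mathbf{y}_v)_{v \in V_t}$ into \Cref{lem:metric:apx-cover} then yields $V'_t$.

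For the running time, the oracle queries dominate the preparatory phase: we issue $(w+1)|V_t|$ queries at node $t$, each costing $\Oh(w \log n)$. Summing via \Cref{claim:sum-vt} gives total oracle cost $\Oh\br{w^2 \log n \cdot \sum_t |V_t|} = \Oh(w^2 \cdot n \log^2 n)$, which accounts for the second term. The core-set extraction itself runs in time $\Oh\br{(k/\eps')^h \cdot |V_t|} = \Oh\br{(w\tau)^h \cdot |V_t|}$ per node $t$, contributing $\Oh\br{(w\tau)^h \cdot n}$ after charging each vertex $v$ only to the unique node $\loc(v)$ (the work for ancestors of $\loc(v)$ is subsumed by the already-absorbed logarithmic overhead in the oracle term, since the coefficient $(w\tau)^h$ is the bottleneck only when $\tau$ is large).

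The main subtlety is the propagation of the approximation error: the oracle's one-sided slack of $2\delta$ is amplified by the factor $5$ inside \Cref{lem:metric:apx-cover}, giving the final $10\delta$ bound rather than $2\delta$. No recursion or further cascading is involved, so a single amplification step suffices. Beyond this bookkeeping, the proof is a mechanical composition of the oracle from Part~\ref{item:thm:oracle} of \Cref{thm:preprocessing} with the core-set construction of \Cref{lem:metric:apx-cover}, and requires no additional structural property of the clustering or the tree decomposition.
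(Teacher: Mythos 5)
Your proposal matches the paper's proof step for step: query the oracle of \Cref{claim:oracle} for every pair $(v,c_i)$ with $v \in V_t$ and $i \in \chi(\parent(t))$, feed the resulting approximate distance vectors into \Cref{lem:metric:apx-cover} with $\delta' = 2\delta$, and sum over $t$ via \Cref{claim:sum-vt}. One small wrinkle: the charging scheme you sketch for the extraction time (attributing each $v$ only to $\loc(v)$) is not what actually happens, since the extraction at node $t$ processes all of $V_t$, so the honest bound for that term is $\Oh\br{(w\tau)^{h}\cdot n\log n}$---which is also exactly what the paper's own one-line invocation of \Cref{claim:sum-vt} yields---and this discrepancy with the stated first term is harmless for the downstream application in \Cref{claim:ecce} and \Cref{thm:preprocessing}.
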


The total number of distances we care about is $(w+1) \cdot \sum_{t\in V(T)} |V_t| = \Oh(w \cdot n\log n)$.
Each of them can be estimated using the oracle in time $\Oh(w \log n)$.
We would like to use those numbers to compute the additive core-sets, however a priori the bound from \Cref{lem:corest-exist} might not hold for approximate distances.
Nevertheless, we prove that a greedy procedure still constructs a small core-set, even after perturbation of distances.
A similar issue is handled in \cite{li2019planar} via randomization but we can afford a simpler argument because our strategy allows us to first query the oracle for all the distances between $V_t$ and $\chi(\parent(t))$.

In the next claim, we argue that the core-sets suffice to estimate the maximal distance between $v$ and $u$, where $v$ is fixed and $u$ is quantified over all vertices from some subtree. 
    There are three sources of accuracy loss: relying on the additive core-set, on the approximate distance oracle, and stretching the paths along a center from $\chi(t)$ (\Cref{lem:path-touch}).

    \begin{restatable}[$\bigstar$]{claim}{claimAnal}
    \label{claim:corset-analysis}
        Consider $t \in V(T)$ with children $t_1, t_2$.
        Let $v \in V_{t_1}$ and $V' \sub V_{t_2}$ be a $\rho$-additive core-set with respect to $\chi(t)$.
        Let $y = \max d_G(v,u)$ over all $u \in V_{t_2}$, and $x$ be defined as follows:
        
        \[x = \max_{u \in V'}  \min_{i \in \chi(t)} (\apxdist(v,c_i) +  \apxdist(c_i,u)).\]
        Then $|x - y|
        \le \rho + 6\cdot \delta$.
        Furthermore, the vertex $u^* \in V'$ realizing $x$ satisfies $d_G(v,u^*) \ge y - \rho - 6\cdot \delta$.
    \end{restatable}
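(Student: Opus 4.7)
The plan is to establish matching upper and lower bounds on $x$ in terms of $y$, combining three sources of error: the additive slack of the oracle (which gives $0 \le \apxdist(a,b) - d_G(a,b) \le 2\delta$), the rerouting bound of \Cref{lem:path-touch}, and the additive core-set property of $V'$.

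First I would verify a separator property for any pair $(v,u)$ with $v \in V_{t_1}$ and $u \in V_{t_2}$: every shortest $(v,u)$-path in $G$ intersects some cluster $C_i$ with $i \in \chi(t)$. This is a direct consequence of the standard tree-decomposition separator property applied to $G_\ccal$ at the node $t$ with children $t_1,t_2$, since clusters in $U_{t_1}\sm\chi(t)$ and $U_{t_2}\sm\chi(t)$ lie in different components of $G_\ccal - \chi(t)$; the edge cases, where the cluster containing $v$ or $u$ itself belongs to $\chi(t)$, are handled by taking that very cluster as $i$. Applying \Cref{lem:path-touch} then gives, for every $v \in V_{t_1}$ and $u \in V_{t_2}$,
\[
d_G(v,u) \;\le\; \min_{i\in\chi(t)}\bigl(d_G(v,c_i)+d_G(c_i,u)\bigr) \;\le\; d_G(v,u)+2\delta,
\]
and adding the oracle error on both endpoints upgrades this to $d_G(v,u) \le \min_i(\apxdist(v,c_i)+\apxdist(c_i,u)) \le d_G(v,u)+6\delta$.

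For the upper bound on $x$: any $u \in V'$ lies in $V_{t_2}$, so $d_G(v,u) \le y$ and the above chain gives $\min_i(\apxdist(v,c_i)+\apxdist(c_i,u)) \le y + 6\delta$; maximizing over $u\in V'$ yields $x \le y + 6\delta$. For the lower bound, I would let $u^{\sharp} \in V_{t_2}$ achieve $d_G(v,u^{\sharp})=y$ and let $u' \in V'$ be its core-set representative, so $|d_G(c_i,u^{\sharp})-d_G(c_i,u')|\le \rho$ for every $i\in\chi(t)$. Using $\apxdist \ge d_G$ and the plain triangle inequality,
\[
\min_i\bigl(\apxdist(v,c_i)+\apxdist(c_i,u')\bigr) \;\ge\; \min_i\bigl(d_G(v,c_i)+d_G(c_i,u^{\sharp})\bigr) - \rho \;\ge\; y - \rho,
\]
so $x \ge y-\rho$, and combining the two estimates gives $|x - y| \le \rho + 6\delta$.

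For the furthermore part, the maximizer $u^* \in V'$ satisfies $x = \min_i(\apxdist(v,c_i)+\apxdist(c_i,u^*))$; applying the separator property and \Cref{lem:path-touch} to the pair $(v,u^*)$ (legal since $u^* \in V' \sub V_{t_2}$) yields $x \le d_G(v,u^*) + 6\delta$, and combining with the already-established $x \ge y - \rho$ gives $d_G(v,u^*) \ge y - \rho - 6\delta$. The main obstacle I anticipate is the separator step, which requires carefully unpacking the definitions of $\loc$, $U_t$, and $V_t$ to certify that the shortest $(v,u)$-path truly must traverse some $C_i$ with $i \in \chi(t)$ in every case; once that is settled, the rest is a bookkeeping exercise in composing the three additive errors.
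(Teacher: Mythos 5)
Your proposal is correct and follows essentially the same route as the paper: the separator step you isolate is exactly the paper's Claim~\ref{claim:separator}, and the three error sources (oracle slack, \Cref{lem:path-touch} rerouting, core-set defect) are composed in the same way to obtain $x - y \le 6\delta$ and $y - x \le \rho$, with the ``furthermore'' part falling out of the same chain applied to $u^*$. One small remark: the ``edge cases'' you flag, where the cluster of $v$ or $u$ lies in $\chi(t)$, in fact cannot occur, since $v \in V_{t_1}$ forces its cluster into $U_{t_1}$, which is disjoint from $\chi(t)$ by the definition of $\loc$ together with the connectivity property of tree decompositions (and symmetrically for $u$), so your fallback is harmless but vacuous.
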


We now use the additive core-sets 
 to estimate the eccentricity of a vertex.
 Essentially, we apply \Cref{claim:corset-analysis} for each choice of the lowest common ancestor of $v,u$; hence $\Oh(\log n)$ times.

    \begin{restatable}[$\bigstar$]{claim}{claimEcce}\label{claim:ecce}
        For a given vertex $v \in V(G)$ we can compute its eccentricity within additive error $16\cdot\delta$ in time $\Oh\br{w^{h+1}\cdot\tau^{h} \cdot \log^2 n}$.
        Within the same time, we can output a vertex $u$ for which $d_G(v,u) \ge \ecc_G(v) - 16\cdot\delta$.
    \end{restatable}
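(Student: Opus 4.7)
The plan is to walk along the $t_v$-to-root path in $T$, where $t_v = \loc(v)$. I would write this path as $P = (v_0, v_1, \dots, v_\ell)$ with $v_0 = t_v$ and $v_\ell$ the root, so $\ell = \Oh(\log n)$ since the decomposition is balanced. The key preparatory step is to partition $V(G)$ according to the tree node $\loc(u)$ of each vertex $u$'s cluster: every $u$ either sits in the subtree rooted at some sibling $s_i$ of $v_i$ (contributing the block $V_{s_i}$ for $i \in [0, \ell-1]$), or has $\loc(u) = v_j$ for some node of $P$ (contributing the block $B_j$, which is the union of the at most $w+1$ clusters $C_k$ with $k \in \chi(v_j)$ and $\loc(k) = v_j$). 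This way the eccentricity decomposes as a maximum over $\Oh(\log n)$ blocks.

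For each sibling block $V_{s_i}$, I would invoke \Cref{claim:corset-analysis} with $t := v_{i+1}$, $t_1 := v_i$, $t_2 := s_i$, and $V' := V'_{s_i}$, the $(10\delta)$-additive core-set supplied by \Cref{claim:corset-compute}; this is legal because $v \in V_{v_i} = V_{t_1}$ (since $\loc(v)$ lies in the subtree of $v_i$). The claim returns a number $x_i$ approximating $\max_{u \in V_{s_i}} d_G(v,u)$ within $16\delta$, together with a witness $u_i^* \in V'_{s_i}$ satisfying $d_G(v, u_i^*) \ge \max_{u \in V_{s_i}} d_G(v,u) - 16\delta$. For each on-path block $B_j$, I would exploit the strong-diameter property from \Cref{def:clustering}: every vertex of a cluster $C_k$ lies within $G$-distance $\delta$ of its center $c_k$, so $\max_{u \in B_j} d_G(v,u)$ is sandwiched between $\max_k d_G(v, c_k)$ and $\max_k d_G(v, c_k) + \delta$. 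Setting $M_j := \max_k \apxdist(v, c_k)$ via the oracle of \Cref{claim:oracle} then approximates this block's contribution within $\Oh(\delta)$, and the maximizing center serves as a local witness.

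I would return $\hat e := \max\bigl(\max_i x_i,\ \max_j M_j\bigr)$ as the eccentricity estimate, and as the output witness pick the candidate from $\{u_i^*\}_i \cup \{c_{k_j^*}\}_j$ with the largest value of $\apxdist(v, \cdot)$, which costs only $\Oh(\log n)$ additional queries. Since the partition of $V(G)$ into blocks is exhaustive and each block's contribution is approximated within an additive $\Oh(\delta)$, the overall estimate lies within $16\delta$ of $\ecc_G(v)$ with appropriately chosen constants, and the selected witness inherits a matching lower bound on $d_G(v, \cdot)$.

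The dominant cost comes from applying \Cref{claim:corset-analysis} along $P$: each application triggers $\Oh(|V'_{s_i}| \cdot w) = \Oh(w^h \tau^h)$ oracle queries, each costing $\Oh(w \log n)$ by \Cref{claim:oracle}, giving $\Oh(w^{h+1}\tau^h \log n)$ per node and $\Oh(w^{h+1}\tau^h \log^2 n)$ overall after multiplying by the path length, while the $\Oh(w^2 \log^2 n)$ needed for the on-path bag queries is a lower-order term. I expect the main subtlety will be handling the on-path blocks $B_j$: they are not captured by any precomputed core-set (those are indexed by subtree roots, not by bag levels) and a naive enumeration would be too expensive, so one has to lean on the strong-diameter property of the clustering to compress each $B_j$ into at most $w+1$ representative centers without introducing more than $\Oh(\delta)$ additional error.
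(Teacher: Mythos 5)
Your partition of $V(G)$ is not exhaustive, and the missing piece is precisely the case that requires a slightly different argument than the one you have set up. You split vertices $u$ by $\loc(u)$ into (a) sibling subtrees $V_{s_i}$ for $s_i$ the sibling of $v_i$ along $P$, and (b) on-path bags $B_j$ with $\loc(u)=v_j$. But a vertex $u$ can also have $\loc(u)$ a \emph{strict} descendant of $v_0 = \loc(v)$, i.e.\ lie in a subtree hanging off one of the two children of $t_v$. Such a node is neither on $P$ nor in any sibling subtree of a node on $P$, so your maximum $\hat e$ simply never inspects such $u$. This case is non-vacuous whenever $t_v$ is not a leaf of $T$, which happens in general (for instance, when $v$'s cluster appears in the root bag, essentially the whole tree sits below $t_v$).

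For this missing case you also cannot directly invoke \Cref{claim:corset-analysis}: that claim takes a node $t$ with children $t_1,t_2$ and needs $v\in V_{t_1}$, but here $v$'s own cluster lives in $\chi(t_v)$, not in a proper subtree under $t_v$. The paper's remedy is a separate, simpler argument: for each child $t'$ of $t_v$, use the precomputed $(10\delta)$-additive core-set $V'_{t'}$ with respect to $\chi(\parent(t'))=\chi(t_v)$, and compare against the \emph{center} $c_v$ of $v$'s cluster (legal since $i_v\in\chi(t_v)$). Taking $u'\in V'_{t'}$ maximizing $\apxdist(c_v,u')$ and adding the strong-radius slack $d_G(v,c_v)\le\delta$ gives an additive error of at most $13\delta$, which stays within the $16\delta$ budget. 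Apart from this omitted branch, your case split, use of \Cref{claim:corset-analysis} for the sibling subtrees (with $\rho=10\delta$, hence $16\delta$ error), the center-based handling of on-path bags via the strong-diameter property, and the running-time accounting all match the paper's proof.
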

    
\Cref{thm:preprocessing}(\ref{item:thm:ecce}) follows by applying \Cref{claim:ecce} to each vertex.
\end{proof}

\thmFirst*
\begin{proof}
    One can assume that $G$ is connected as otherwise the diameter is $\infty$.
    We compute the eccentricity $D$ of an arbitrary vertex $v$: this can be done in linear time~\cite{TazariM09}.
    The diameter of $G$ must be contained in $[D, \, 2D]$.
    We apply \Cref{lem:local-tw} for $\delta = \eps D/16$  in time $\Oh(n \log n)$; let $\ccal$ denote the obtained $\delta$-radius clustering. 
    Next, let $\tau = \diam(G) / \delta$; note that $\tau = \Oh(1/\eps)$.
    By \Cref{lem:local-tw}(\ref{item:local:tw}) we know that $\tw(G_\ccal) = \Oh_h(1/\eps)$.
    We use \Cref{lem:prelim:twd-compute} to build a binary tree decomposition of $\tw(G_\ccal)$ of width $w = \Oh_h(\eps^{-2})$, depth $\Oh(\log n)$ and size $\Oh(n)$; this takes time $\Oh_h(\eps^{-7}\cdot n\log n)$.
    Now we can take advantage of \Cref{thm:preprocessing}(\ref{item:thm:ecce}) to compute the eccentricities of all vertices within additive error $16\cdot\delta \le \eps\cdot\diam(G)$, together with vertices that realize these distance approximately.
    This takes time $\Oh_h(w^{h+1} \cdot \tau^{h} \cdot n \log^2 n) = \Oh_h((1/\eps)^{3h+2} \cdot n \log^2 n)$.
    Finally, observe that $h$ is greater than 1 in any non-trivial instance so $7 <  3h+2$ and so the time needed to compute the tree decomposition is negligible. 
\end{proof}

In \Cref{app:main} we show how to replace additive error in the distance oracle with multiplicative one, which yields \Cref{thm:oracle}.

\section{Handling Apices}
\label{sec:apices}

We move on to the proof of \Cref{thm:apex}, extending the diameter approximation to graphs $G$ with an apex set $A$, for which $G-A$ is apex-minor-free.
We do not formulate an analogous theorem 
generalizing the distance oracle because it is trivial.
To see this, first precompute all the distances from each $w \in A$ and then, given $u,v$, return the minimum of their distance in $G-A$ (which can be handled with \Cref{lem:oracle-delta}) and $\min_{w \in A} d(u,w) + d(w,v)$. 

The reason why we cannot apply the same reduction for diameter computation is that $\diam(G-A)$ might be huge compared to $\diam(G)$.
Hence we lose the relation between the diameter and radii of clusters (governing the additive error), crucial to bound the treewidth of the cluster graph.
As a remedy, we will first preprocess the graph $G-A$ to chop it into connected components of bounded diameter.

\begin{lemma}\label{lem:diam-reduction}
    Let $H$ be a fixed graph and $c \in \nn$.
    Let $G$ be a connected edge-weighted $n$-vertex connected graph, $A \sub V(G)$  be such that $G-A$ is $H$-minor-free and $|A|=c$, and $D \in \rr$ satisfy $2D \ge \diam(G)$.
    There is an algorithm that, given the above and $\eps > 0$, runs in time $\Oh_{H,c}((1/\eps)^{c} \cdot n)$ and either correctly reports that $\diam(G) > D$ or outputs a connected edge-weighted $n$-vertex graph $G'$ and  $A' \sub V(G')$ of size $c$, with the following guarantees.
    \newline 
    
    \begin{enumerate}
        \item $G' - A'$ is $H$-minor-free.
        \item $\diam(G') \ge \diam(G)$.
        \item If $\diam(G) \le D$ then $\diam(G') \le (1+\eps)\cdot D$.
        \item Each connected component of $G' - A'$ has strong diameter at most $8 \cdot (2/\eps)^{c} \cdot D$.
    \end{enumerate}
\end{lemma}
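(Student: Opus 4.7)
I would prove this by induction on $c$, the size of the apex set. For the base case $c = 0$, the graph $G$ is itself $H$-minor-free, so one SSSP run from an arbitrary vertex in $\Oh(n)$ time (using the linear-time SSSP algorithm for minor-free graphs) yields its eccentricity; if that eccentricity exceeds $D$ we report $\diam(G) > D$, otherwise we return $G' := G$. Every connected component then has strong diameter at most $\diam(G) \le 2D \le 8\,(2/\eps)^{0}\,D$, and the remaining conditions are immediate.

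For the inductive step I would fix an apex $a \in A$ and compute $d_G(v, a)$ for every $v$ via Dijkstra in time $\Oh(n \log n)$; if any such distance exceeds $D$ we report $\diam(G) > D$. Otherwise, partition $V(G) \setminus \{a\}$ into thin layers $L_j := \{v : d_G(v, a) \in [jR, (j+1)R)\}$ with $R = \eps D / 4$, yielding at most $\lceil 4/\eps \rceil$ layers. For each layer I would form the auxiliary graph $\hat G_j$ on vertex set $L_j \cup (A \setminus \{a\})$, equipped with the $G$-edges induced on that set together with \emph{virtual apex-shortcut edges} $u$-$a'$ of weight $d_G(u, a')$ for every $u \in L_j$ and $a' \in A \setminus \{a\}$. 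Since each such virtual edge corresponds to a shortest path that already exists in $G$, adding it does not decrease any $G$-distance; at the same time (for $c \ge 2$) it ensures $\diam(\hat G_j) \le 2D$ because any two vertices of $\hat G_j$ can be joined via another apex. The set $A \setminus \{a\}$ is an apex set of $\hat G_j$ of size $c - 1$, and $\hat G_j - (A \setminus \{a\})$ is a subgraph of $G - A$, hence $H$-minor-free.

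I would then invoke the inductive hypothesis on each connected component of $\hat G_j$ (with apex set $A \setminus \{a\}$ and diameter parameter $D$ unchanged), obtaining $\hat G'_j$ whose components of $\hat G'_j - (A \setminus \{a\})$ each have strong diameter at most $8(2/\eps)^{c-1} D$. Assemble $G'$ by taking $V(G') := V(G)$, $A' := A$, and the edge set equal to the union of the edges of all $\hat G'_j$, every $G$-edge incident to $a$ or lying inside $A$, and virtual edges $u$-$a$ of weight $d_G(u, a)$. Every edge of $G'$ either already belongs to $G$ or represents an existing $G$-path, so walks in $G'$ translate to walks of the same length in $G$; hence $d_{G'}(x, y) \ge d_G(x, y)$ for all $x, y$ and thus $\diam(G') \ge \diam(G)$. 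Moreover, no edge of $G' - A'$ bridges two different layers $L_j$, so each connected component of $G' - A'$ is contained in a single $\hat G'_j$-component; the inductive bound $8(2/\eps)^{c-1} D \le 8(2/\eps)^c D$ gives the required strong-diameter estimate.

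The main obstacle is bounding $\diam(G') \le (1+\eps)D$ in the case $\diam(G) \le D$, since the cross-layer edges of $G - A$ have been dropped from $G'$ and the naive replacement of a shortest $G$-path by a single apex-$a$ detour gives only a factor-$2$ guarantee. My plan is to transform any shortest $G$-path piecewise: every cross-layer edge $uv$ with endpoints in layers $j, j'$ has weight at least $(|j - j'| - 1)R$, so a shortest $G$-path of length $\le D$ contains only $\Oh(1/\eps)$ edges with $|j - j'| \ge 2$ and its total radial variation around $a$ is bounded by $D$. Each cross-layer edge can be replaced by the detour $u \to a \to v$ via virtual edges, at excess cost $2 \min(d_G(u, a), d_G(v, a))$. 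Proving that the excesses amortise to $\Oh(\eps D)$ in total---charging each expensive ``deep'' detour against a correspondingly long radial stretch of the original path that is forced to sit near that layer---is the principal technical challenge, since individual detours can be costly in absolute terms and a careless accounting would blow up the diameter bound.
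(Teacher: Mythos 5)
Your approach is genuinely different from the paper's: you propose an induction on $c$ that peels off one apex $a$ at a time, slicing $G - a$ into annular layers of width $R = \eps D/4$ around $a$ and recursing on each slice, whereas the paper is a one-shot construction based on rounded distance-to-$A$ \emph{signatures} and an irrelevant-vertex rule (remove edges incident to vertices that are far, in $G-A$, from every ``relevant'' signature representative; \Cref{claim:relevant}--\Cref{claim:cc-diam}). Unfortunately the layering approach has a gap that you correctly flag but cannot patch, and I believe it is unpatchable by amortisation.

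The problem is that cutting every edge that crosses a layer boundary forces a detour through $a$ for pairs that never come close to $a$, and the excess cost cannot be charged to radial progress. Concretely, take $c = 1$, $A = \{a\}$, and let $G - a$ be a path $v_1, \dots, v_m$ with unit-weight edges of length $\eta \to 0$, so that the whole path has length $(m-1)\eta \le D$. Connect each $v_i$ to $a$ by an edge whose weight alternates between $3D/4 - \eta/2$ (odd $i$) and $3D/4 + \eta/2$ (even $i$). Then $d_G(v_i, a)$ alternates across the layer boundary sitting at $3D/4$, so \emph{every} edge $v_i v_{i+1}$ is cross-layer and gets dropped; yet the path is radially flat (it hovers at radius $\approx 3D/4$ throughout), so your proposed charging scheme against ``radial variation'' has nothing to charge to. One checks $\diam(G) \approx 3D/4 \le D$, while in your $G'$ the only route from $v_1$ to $v_m$ is through $a$, giving $d_{G'}(v_1,v_m) \approx 3D/2 > (1+\eps)D$ --- violating condition~3. (In the paper's scheme all $v_i$ share the same relevant signature, lie in a single small ball around a representative $b \in B$, and none of their edges are removed, so no detour is needed.) The core difficulty is that ``which cross-layer edges must be kept'' is not a function of layer indices alone; the paper's signature/relevance machinery is precisely the tool for deciding this.

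A secondary issue in the induction: you invoke the hypothesis on each $\hat G_j$ with the same parameter $D$, but even when $\diam(G) \le D$ it can happen that $\diam(\hat G_j) \in (D, 2D]$, in which case the inductive call may legitimately return ``$\diam(\hat G_j) > D$''; that outcome does not imply $\diam(G) > D$, and you have not specified how to proceed. (Rescaling $D$ upward at each level would compound the $(1+\eps)$-loss and the strong-diameter bound across $c$ levels, which would need explicit tracking.) Finally, note the running-time target is $\Oh_{H,c}((1/\eps)^c \cdot n)$ with no $\log n$; the paper achieves this by using the linear-time SSSP for $K_{|V(H)|+c}$-minor-free graphs throughout rather than Dijkstra.
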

\begin{proof}
    First of all, we remove all edges with weight larger then $2D$ as they cannot contribute to any shortest path.
    Observe that $G$ is $K_h$-minor-free for $h = |V(H)| + c$. 
    We compute shortest paths from each vertex $w\in A$ using the linear SSSP algorithm for minor-free graphs~\cite{TazariM09}.
    If any computed distance is greater than $D$, we can terminate the algorithm.
    Otherwise for each $w \in A$, $v \in V(G)$, we insert the $wv$-edge with weight $d_G(w,v)$ or update the weight if the edge is already present.
    This preserves all the pairwise distances in $G$ and does not affect the graph $G-A$.
    We will keep the variable $G$ to refer to the modified graph.
    The performed modification ensures the following property.

    \begin{claim}\label{claim:internal}
        Let $u,v \in V(G)$ and suppose that there is a shortest $(u,v)$-path in $G$ that intersects $A$.
        Then there is a shortest $(u,v)$-path with one internal vertex that belongs to~$A$.
    \end{claim}

    For $u,v \in V(G)$ let $d'(u,v)$ denote the largest integer $j$ for which $j\cdot \eps D/2 \le d_G(u,v)$.
    For $v \in V(G) \sm A$ we define its {\em signature} as $S_v = \br{d'(v,a_1),\dots,d'(v,a_c)}$ where $(a_1,\dots, a_c)$ is some fixed ordering of $A$.
    Let $\scal = \{S_v \colon v \in V(G) \sm A\}$.
    Since we have not terminated the algorithm in the first step, each $d'(v,a_i)$ is most $\lfloor 2/\eps \rfloor$ and so $|\scal| \le (2/\eps)^{c}$.

    We say that a pair $S_1,S_2 \in \scal$ is {\em relevant} if
    for each $i \in [c]$ we have $S_1[i] + S_2[i] > 2/\eps$.
    We also call $S \in \scal$ relevant if it appears in at least one relevant pair.

    \begin{claim}\label{claim:relevant}
        Let $u,v \in V(G) \sm A$ be such that $(S_u,S_v)$ is a relevant pair.
        Then there is no $(u,v)$-path of length $\le D$ that goes through $A$.
    \end{claim}
    \begin{claimproof}
        If there was such a path then $d_G(u,w) + d_G(w,v) \le D$ for some $w \in A$.
        But for each $w \in A$ we have $d_G(u,w) + d_G(w,v) \ge (\eps D/2) \cdot (d'(u,w) + d'(w,v)) > D$.
    \end{claimproof}

    \begin{claim}\label{claim:strong-radius}
        Let $u,v \in V(G) \sm A$ be such that $S_u = S_v$ and $S_u$ is relevant.
        If $d_{G-A}(u,v) > 2D$ then $\diam(G) > D$.
    \end{claim}
    \begin{claimproof}
        Suppose that $\diam(G) \le D$.
        Since $S_u$ is relevant, there exists $z \in V(G) \sm A$ such that $(S_u,S_z)$ is a relevant pair.
        Due to \Cref{claim:relevant}, the shortest $(u,z)$-path avoids $A$ so $d_{G-A}(u,z) \le D$.
        By the same argument  $d_{G-A}(v,z) \le D$ and the claim follows from the triangle inequality.
    \end{claimproof}

    For each relevant $S \in \scal$ let us choose an arbitrary $v$ with $S_v = S$; we denote the set of chosen vertices as $B$.
    Clearly $|B| \le |\scal| \le (2/\eps)^{c}$.
    For each $b \in B$ we compute all the distances from $b$ in $G-A$ in linear time and check whether there is any $v \in V(G) \sm A$ with $S_v = S_b$ and $d_{G-A}(b,v) > 2D$.
    If so, then by \Cref{claim:strong-radius} we can again terminate the algorithm.

\begin{figure}
\centering\includegraphics[width=14cm]{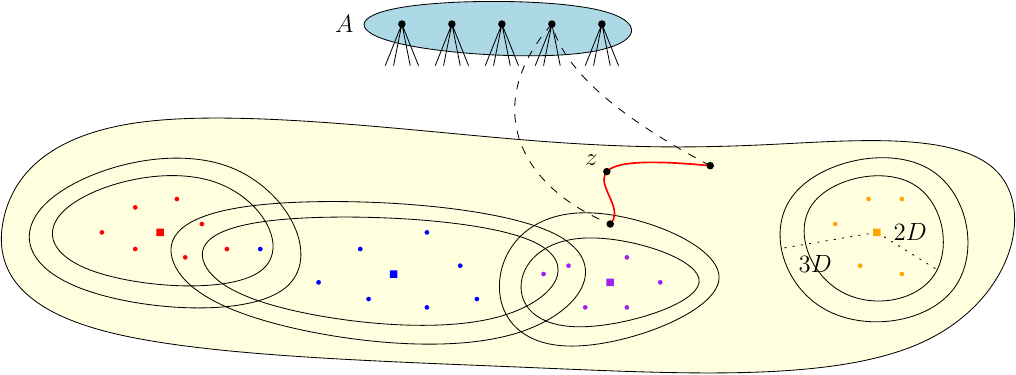}
\caption{Illustration to \Cref{claim:remove}.
The set $B$ of representatives for relevant signatures is given by the squares.
By \Cref{claim:strong-radius}, every vertex sharing the  same signature  as $b \in B$ (represented by the same color) must lie in the $2D$-radius ball around $b$ in $G-A$. If $z$ is outside all $3D$-radius balls, then every path in $G-A$ of length $\le D$ traversing $z$ must connect vertices $u,v$ for which $(S_u,S_v)$ is an irrelevant pair.
For such pairs, there must be an almost shortest path traversing $A$.
}
\label{fig:irrelevant}
\end{figure}

    \begin{claim}\label{claim:remove}
        Suppose that the algorithm has not been terminated and let $Z \sub V(G) \sm A$ be the set of vertices $z$ satisfying $d_{G-A}(z,b) > 3D$ for all $b \in B$.
        Next, let $F \sub E(G-A)$ be the set of edges from $G-A$ with at least one endpoint in $Z$.
        Suppose that $\diam(G) \le D$.
        Then $\diam(G \sm F) \le (1+\eps)\cdot D$.
    \end{claim}
    \begin{claimproof}
        Consider $u, v \in V(G)$ and let $P$ be a shortest $(u,v)$-path in $G$.
        By assumption, $P$ is not longer than $D$.
        We need to show that there is a $(u,v)$-path of length $\le (1+\eps)\cdot D$ that avoids~$F$.
        If $P$ intersects $A$ then by \Cref{claim:internal} we can assume that $P$ consists only of edges incident to $A$ (at most two).
        Then clearly $P$ avoids $F$.

        Suppose now that $V(P) \cap A = \emptyset$ and that $P$ uses some edge from $F$.
        Then there is $z \in V(P) \cap Z$ whose distance in $G-A$ to each of $u,v$ is at most $D$.
        We claim that $S_u, S_v$ are irrelevant.
        Assume w.l.o.g. that $S_u$ is relevant and let $b \in B$ be the chosen vertex with $S_u = S_b$.
        We have $d_{G-A}(u,b) \le 2D$ as otherwise we would have terminated the algorithm due to \Cref{claim:strong-radius}.
        But $d_{G-A}(u,z) \le D$ so $d_{G-A}(z,b) \le 3D$.
        This contradicts the property used to define $Z$.
        
        We established that $S_u, S_v$ are irrelevant, so the pair $(S_u, S_v)$ is also irrelevant.
        By definition, this means that there is $w \in A$ such that $d'(u,w) + d'(w,v) \le 2/\eps$.
        We have  $d_G(u,w) \le (\eps D/2) \cdot (d'(u,w) + 1)$ and likewise for the pair $(v,w)$.
        We estimate 
        \[d_G(u,w) + d_G(w,v) \le (\eps D/2) \cdot (d'(u,w) + d'(w,v) + 2) \le D + (\eps D/2) \cdot 2 = (1+\eps)\cdot D.\]
        Hence the path $(u,w,v)$  is sufficiently short and it avoids $F$.
        The claim follows.
    \end{claimproof}

    We showed that removing all edges from $F$ is safe for approximating the diameter of $G$.
    It remains to prove that this operation splits $G-A$ into pieces of bounded strong diameter.

    \begin{claim}\label{claim:cc-diam}
        Let $C \sub V(G) \sm A$ be a set of vertices inducing a connected component of $(G - A) \sm F$.
        Then $\diam(G[C]) \le 8 \cdot (2/\eps)^c \cdot D$.
    \end{claim}
    \begin{claimproof}
        For $b \in B$ let $C_b$ be the set of vertices within distance $3D$ from $b$ in $G-A$.
        Observe that $C$ is either a single vertex from $Z$ or a union of $C_b$ for $b \in B'$ for some $B' \sub B$.
        In the first case the diameter is 0 so let us focus on the second case.
        
        Let $u, v \in C$ and $P$ be a shortest $(u,v)$-path in $G[C]$.
        Then $u \in C_{b_1}$ for some $b_1 \in B'$.
        Let $v_1$ be the farthest vertex on $P$ (counting from $u$) that belongs to $C_{b_1}$.
        The $(u,v_1)$-distance in $G[C]$ is at most their distance in $G[C_{b_1}]$ so $d_P(u,v_1) \le 6D$ as $P$ is a shortest path.
        Next, take $u_2$ to be the next vertex on $P$ after $v_1$ and pick $b_2 \in B$ satisfying $u_2 \in C_{b_2}$.
        Analogously as before, we find  the  vertex $v_2 \in C_{b_2}$ that is farthest on $P$ and continue this process.
        By construction, every vertex $b \in B'$ appears at most once as $b_i$ so this splits $P$ into at most $|B'|$ subpaths of length $\le 6D$.
        Also for each $i$ the edge between $v_i$ and $u_{i+1}$ has weight at most $2D$ because we have removed all the heavier edges at the very beginning.
        The claim follows from $|B'| \le |B| \le (2/\eps)^c$.  
    \end{claimproof}

    We check that $G' = G \sm F$ and $A' = A$ satisfy all the points of the lemma.
    First, $G' - A$ is a subgraph of $G - A$ so it remains $H$-minor-free.
    Next, removing edges can only increase the diameter so $\diam(G') \ge \diam(G)$ while the second inequality follows from \Cref{claim:remove}.
    The diameter bound for components of $G'-A'$ has been given in \Cref{claim:cc-diam}.
    Finally, the sets $Z$ and $F$ can be identified in time $\Oh_H(|B| \cdot n) = \Oh_{H,c}((1/\eps)^{c} \cdot n)$. 
\end{proof}

We briefly explain how \Cref{lem:diam-reduction} leads to \Cref{thm:apex}.
Given  $D \in [\diam(G),\, 2\cdot\diam(G)]$ we want to learn that $\diam(G) > D$ or that $\diam(G) \le  (1 + \eps)\cdot D$.
Let $(G', A')$ be the output given by \Cref{lem:diam-reduction} and $\delta = \Theta(\eps D)$.
Each connected component $G^i$ of $G'- A'$ 
satisfies $\diam(G^i) = \Oh((1/\eps)^{c}\cdot D)$. 
We apply \Cref{lem:local-tw} to get a $\delta$-radius clustering $\ccal^i$ of $G^i$ so that the treewidth of the cluster graph $G^i_{\ccal^i}$ is $\Oh(\diam(G^i) / \delta) = \Oh((1/\eps)^{c+1})$.
We combine those to get a $\delta$-radius clustering $\ccal'$ of $G'$ by assigning each $w \in A'$ a singleton cluster.

Using \Cref{lem:prelim:twd-compute} we can compute a balanced tree decomposition of $G'_{\ccal'}$ of width $w = \Oh((1/\eps)^{2(c+1)})$.
       It can be now processed with \Cref{thm:preprocessing}(\ref{item:thm:ecce}): in time $\Oh(w^{h+1}(1/\eps)^h\cdot n\log^2 n)$ we can estimate $\diam(G')$ within additive error $16\cdot \delta =  \Theta(\eps D)$.
If this value is too large, we learn that  $\diam(G') > (1+\eps) \cdot D$, implying  $\diam(G) > D$.
Otherwise we infer that $\diam(G) \le \diam(G') \le (1 + \eps)\cdot D$.
Repeating this procedure for different $D$ allows us to estimate $\diam(G)$ within multiplicative error $(1+\eps)$.
The details can be found in \Cref{app:main}.

\section{Conclusion and Open Problems}
\label{sec:conclusion}

We have presented almost-linear algorithms approximating diameter that bypass topological arguments and work 
way beyond surface-embeddable graphs. 
A natural next step would be to generalize these results further, to all graph classes excluding a minor.
The reason why we could not extend \Cref{cor:planar} to all cases covered in \Cref{thm:apex} is that we need to know the deletion set to the apex-minor-free class. 
The best known algorithm finding such a set requires quadratic time~\cite{SauST22}.
To apply the same strategy for the general case of $H$-minor-free graphs, one would need to compute some form of the Robertson and Seymour's decomposition, which seems an even harder task.
However, recent advances in minor testing~\cite{KorhonenPS24} suggest that this may be possible in time $\Oh_H(n^{1+o(1)})$ (see the section ``Outlook'' there).
We have not attempted to generalize \Cref{thm:apex} to work on such decompositions because (1) this would require a tedious analysis of a balanced tree decomposition of the clique-sum decomposition, and (2) this strategy currently seems  hopeless to achieve running time $\Oh_H(\poly(1/\eps, \log n)\cdot n)$.
Is there an alternative approach that would work without the decomposition at hand? 


Another question is whether
one can improve the dependence on $n$ to $\Oh(n\log n)$ or even $\Oh(n)$? 
The current running time bound follows from the fact that we need to precompute $\Omega(n\log n)$ distances and each oracle call requires time $\Omega(\log n)$.
Can this be improved?
For example, for planar metrics there exist distance oracles with constant query time (for constant $\eps)$
albeit with preprocessing time $\Omega(n\log^3 n)$~\cite{ChanS17, LeW21oracle}.

When it comes to the dependence on $\eps$, our running time is burdened by the fact that for $k$-treewidth graphs we compute a tree decomposition of width $\Oh(k^2)$.
Unfortunately, all the known almost-linear $\Oh(1)$-approximation algorithms for treewidth require time exponential in $k$.
Since we rely on a construction from~\cite{FominLSPW18} that works for general graphs, it is tempting to improve the width guarantee therein when the input graph is $H$-minor-free.
Again, this is doable in planar graphs~\cite{KammerT16}.

\bibliography{bib}
\appendix

\section{Missing proofs}
\label{app:missing}

\lemLocal*
\begin{proof}
    We fix an arbitrary vertex $r \in V(G)$ and compute the shortest-path tree $T$ rooted at $r$ using Dijkstra's algorithm in time $\Oh(n\log n)$.
    For a vertex $v \in V(G)$ we define $\layer(v)$ as the largest integer $j$ for which $j\cdot \delta \le d_T(r,v)$.
    We construct the clustering $\ccal$ by contracting each edge in $T$ connecting two vertices of the same layer.
    The layer of such cluster is defined as the common layer of the vertices inside it.
    In each cluster $C_i$ there is a unique vertex $v$ whose layer is greater than the layer of its parent in $T$ or $v=r$: this vertex becomes the center $c_i$ of the cluster $C_i$.
    Clearly $c_i$ can be reached from each $v \in C_i$ using a path contained in $T$ of length at most $\delta$.
    This path goes inside $G[C_i]$ so indeed we obtain a $\delta$-radius clustering.

    (\ref{item:local:interval}). Let $I = [i,j]$ and $p = j-i+1$.
        Observe that for each cluster $C$ with $\layer(C) < i$ there is a path in $G_\ccal$ from the root cluster to $C$ that goes only through clusters of layers smaller than $i$.
        Hence the subgraph of  $G_\ccal$ induced by such layers is connected.
        If $i > 0$ let ${\hat G}^I_\ccal$ be obtained from $G_\ccal$ by contracting into a single vertex all clusters $x$ with $\layer(x) < i$ and removing all clusters $y$ with $\layer(y) > j.$
        If $i = 0$ we set ${\hat G}^I_\ccal = { G}^I_\ccal$.
        
        We claim that the diameter of the unweighted graph  ${\hat G}^I_\ccal$ is at most $2p$.
        Consider a cluster $x$ for which $\layer(x) > i$.
        Then $x$ is adjacent in $G_\ccal$ to a cluster $y$ for which $\layer(y) = \layer(x) - 1$.
        Hence for each $x \in V({\hat G}^I_\ccal)$ there is path, on at most $p$ vertices, to $y$ with $\layer(y) = i$.
        If $i = 0$ then there is unique cluster of layer 0 and so each pair of vertices in ${\hat G}^I_\ccal$ can be connected by a path of length at most $2p-1$.
        If $i > 0$ then every cluster is within distance $p$ from the super node obtained from contracting layers smaller then $i$ and so the (unweighted) diameter of ${\hat G}^I_\ccal$ is bounded by $2p$.
        
        Since ${\hat G}^I_\ccal$ is a contraction of $G$ it is also $H$-minor-free.
        Due to \Cref{lem:local-property}, the treewidth of ${\hat G}^I_\ccal$ can be bounded by $\Oh_H(p)$.
        The claim follows by observing that ${ G}^I_\ccal$ is a subgraph of ${\hat G}^I_\ccal$ so its treewidth cannot be larger.
    
    (\ref{item:local:tw}). 
     We prove the bound on $\tw(G_\ccal)$.
     Let $k$ be the highest assigned layer number.
     For each $v \in V(G)$ we have $d_T(r,v) = d_G(r,v) \le \diam(G)$ so $k \le \diam(G) / \delta$.
     The bound follows from point~(\ref{item:local:interval}) for interval $I = [0,k]$.

    (\ref{item:local:distance}).
     Consider $u,v \in V(G)$ with  $d_G(u,v) \le \delta \cdot p$.
     From the triangle inequality we get
     \[|d_T(r,u) - d_T(r,u)| = |d_G(r,u) - d_G(r,v)| \le d_G(u,v). \]
     
     Hence $|\layer(u) - \layer(v)| \le p$.
    Assume w.l.o.g. that $\layer(u) \le \layer(v)$ and let $I = [\layer(v) - p, \layer(u) + p]$; we have $|I| \le 2p + 1$.
    Every vertex $w$ lying on a shortest $(u,v)$-path in $G$ is also within distance $\delta \cdot p$ from each of $u,v$ so it must satisfy $\layer(w) \in I$.
    Therefore, any shortest $(u,v)$-path in $G$ is also present in $G^I$.
\end{proof}

\subsection{Eccentricities}
\label{app:ecce}

This subsection contains a self-contained proof of \Cref{item:thm:ecce} from \Cref{thm:preprocessing}.

For a non-root $t \in V(T)$ we refer to its parent in $T$ as $\parent(t)$.
For $u,v \in V(G)$ let $\apxdist(u,v)$ denote the value returned by the oracle from \Cref{claim:oracle} when queried for the pair $u,v$.
This is a well-defined function because the oracle is deterministic.
Recall the concept of an additive core-set from \Cref{lem:corest-exist}.
Also, recall the sets $V_t$ from the first part of the proof and that $\tau = \diam(G) / \delta$.

\claimCoreset*
\begin{claimproof}
     Fix a non-root $t \in V(T)$.   
     Let us index the set $\chi(\parent(t))$ in any order to form a tuple $S = (s_0, \dots, s_{k-1})$.
     Note that $k \le w + 1$.
     We call the oracle from \Cref{claim:oracle} for each $v \in V_t$ and each $s_j$, $0 \le j < k$.
     A single call requires time  $\Oh(w\cdot \log n)$ so we spend
     $\Oh( w^2\cdot \log n\cdot |V_t|)$ time. 
     
     For each $v$ we now have a vector ${\bf y}_v \in \rr^k$ defined as ${\bf y}_v[j] = \apxdist(v,s_j)$.
     Recall that $\Phi_{S}(v)$ stands for the distance tuple: $\br{d_G(v,s_0), \dots, d_G(v,s_{k-1})}$.
     The oracle specification guarantees that $\ell_\infty({\bf y}_v - \Phi_{S}(v)) \le 2\delta$.
     
     We apply 
    \Cref{lem:metric:apx-cover} for $\delta' = 2\delta$ to compute a $(10\cdot\delta)$-additive core-set $V'_t \sub V_t$ of size $\Oh\br{w^{h-1}\tau^{h}}$ in time $\Oh\br{(w\cdot\tau)^{h} \cdot |V_t|}$.
    The bound on the total running time follows from \Cref{claim:sum-vt}.
\end{claimproof}

\begin{claim}\label{claim:separator}
        Consider $t \in V(T)$ with distinct children $t_1, t_2$.
        Let $u_1 \in V_{t_1}, u_2 \in  V_{t_2}$.
        Then there is $i \in \chi(t)$ for which $d_{G}(u_1,c_i) + d_{G}(c_i,u_2) \le d_G(u_1,u_2) +2\delta$. 
    \end{claim}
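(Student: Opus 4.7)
The plan is to mimic the path-tracking argument already used in the proof of Claim \ref{claim:inner-path}. Fix any shortest $(u_1,u_2)$-path $Q$ in $G$, let $\ccal_Q \sub \ccal$ be the set of clusters that $Q$ intersects, and let $T_Q = \chi^{-1}(\ccal_Q)$ be the set of tree nodes whose bag contains some cluster from $\ccal_Q$. I want to show that $t \in T_Q$; then any $i \in \ccal_Q \cap \chi(t)$ works, because $Q$ visits $C_i$ and \Cref{lem:path-touch} applied with $w=c_i$ gives exactly $d_G(u_1,c_i)+d_G(c_i,u_2)\le d_G(u_1,u_2)+2\delta$, using that every vertex of $C_i$ lies within distance $\delta$ from $c_i$ in $G[C_i]$ (hence in $G$).

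To show $t \in T_Q$, I first observe that $\ccal_Q$ is connected in $G_\ccal$ since $G_\ccal$ is a contraction of $G$ and $Q$ is a path. By the edge condition of tree decompositions, for every edge $ij$ of $G_\ccal[\ccal_Q]$ there is a bag containing both $i$ and $j$, so $\chi^{-1}(i)$ and $\chi^{-1}(j)$ meet; combined with the connectivity of each $\chi^{-1}(i)$ (\Cref{def:prelim:tw}(\ref{item:tree:based:connected})), this implies $T_Q$ is a connected subtree of $T$.

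Next I would argue that $T_Q$ meets both the subtree of $T$ rooted at $t_1$ and the one rooted at $t_2$. Indeed, let $i_1 \in U_{t_1}$ be the cluster with $u_1 \in C_{i_1}$; then $i_1 \in \ccal_Q$, so $\loc(i_1) \in \chi^{-1}(i_1) \sub T_Q$, and by definition of $U_{t_1}$ the node $\loc(i_1)$ lies in the subtree at $t_1$. Symmetrically for $u_2$ and $t_2$. Since $t_1, t_2$ are distinct children of $t$, the unique path in $T$ between the two subtrees passes through $t$; by connectedness of $T_Q$ this forces $t \in T_Q$. Therefore some $i \in \ccal_Q$ satisfies $i \in \chi(t)$, and the argument concludes as described above.

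The only real obstacle is the separator step (forcing $t \in T_Q$), but it is essentially the same tree-decomposition bookkeeping as in Claim \ref{claim:inner-path}, with the difference that the LCA-type node $t$ is handed to us in the statement rather than chosen as the topmost node of $T_Q$.
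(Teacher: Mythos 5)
Your proof is correct. Both you and the paper reduce the claim to showing that the shortest $(u_1,u_2)$-path $Q$ must intersect some cluster $C_i$ with $i \in \chi(t)$, and then conclude via \Cref{lem:path-touch}. The only difference is how that crossing is established: the paper argues that $\chi(t)$ is a $(j_1,j_2)$-separator in $G_\ccal$ (where $j_1,j_2$ are the clusters of $u_1,u_2$, necessarily in $U_{t_1}$ and $U_{t_2}$), which lifts to $\bigcup_{i\in\chi(t)} C_i$ being a $(u_1,u_2)$-separator in $G$. You instead reuse the $T_Q$ machinery from \Cref{claim:inner-path}: $T_Q$ is a connected subtree meeting both the subtree rooted at $t_1$ (via $\loc(i_1)$) and the one rooted at $t_2$ (via $\loc(i_2)$), so $t\in T_Q$, i.e.\ $\ccal_Q\cap\chi(t)\neq\emptyset$. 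These are dual phrasings of the same tree-decomposition fact; your version has the mild stylistic advantage of uniformity with \Cref{claim:inner-path}, while the paper's is slightly more self-contained via the standard bag-as-separator property. All steps check out, including the disjointness of the subtrees rooted at distinct children $t_1,t_2$ forcing $t$ on any connecting path in $T$, and the applicability of \Cref{lem:path-touch} since $C_i$ lies in the $\delta$-ball around $c_i$ already inside $G[C_i]$.
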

    \begin{claimproof}
        Let $j_1,j_2 \in \ccal$ denote the clusters of $u_1,u_2$, respectively.
        We have $j_1 \in U_{t_1}, j_2 \in U_{t_2}$ and this in particular implies $j_1,j_2 \not\in\chi(t)$.
        Since all the bags containing $j_1$ (resp. $j_2$) are located within the subtree rooted at $t_1$ (resp. $t_2$), we know that $\chi(t)$ is a $(j_1,j_2)$-separator in $G_\ccal$.
        As $G_\ccal$ is a contraction of $G$, it follows that $\bigcup_{i \in \chi(t)} C_i$ is a $(u_1,u_2)$-separator in $G$.

        Let $Q$ be a shortest $(u_1,u_2)$-path in $G$.
        By the argument above, there is some $i \in \chi(t)$ for which $Q$ intersects $C_i$.
        The claim follows from \Cref{lem:path-touch}.
    \end{claimproof}

In the next claim, we argue that the core-sets suffice to estimate the maximal distance between $v$ and $u$, where $v$ is fixed and $u$ is quantified over all vertices from some subtree. 

\claimAnal*
 \begin{claimproof}
        We will first show that $y \ge x - 6\delta$.
        Let $u^* \in V'$ be the vertex that realizes $x$.
        By the specification of the oracle (\Cref{claim:oracle}), the value of $\apxdist(v,c_i)$ is upper bounded by $d_G(v,c_i) + 2\delta$ for each $i \in \chi(t)$ and likewise for $u^*$. 
        Hence $x \le \min_{i \in \chi(t)} d_G(v,c_i) +  d_G(c_i,u^*) + 4\delta$.    
        By \Cref{claim:separator}, there is $i \in \chi(t)$ such that 
        \[d_G(v,u^*) \ge d_G(v,c_i) + d_G(c_i,u^*) - 2\delta \ge x - 6\delta.\]
        As $u^*$ belongs to $V' \sub V_{t_2}$, its distance to $v$ gives a lower bound for $y$.
        
        Now we show $y \le x + \rho$.
        Let $u \in V_{t_2}$ be the vertex farthest from $v$. 
        By the definition of an additive core-set, there is $u' \in V'$ for which $\ell_\infty(\Phi_{\chi(t)}(u) - \Phi_{\chi(t)}(u')) \le \rho$.
        Let $i \in \chi(t)$ minimize $\apxdist(v,c_i) +  \apxdist(c_i,u')$.
        The oracle never underestimates distances (\Cref{claim:oracle})~so
        \[d_G(v,c_i) + d_G(c_i,u') \le \apxdist(v,c_i) +  \apxdist(c_i,u') \le x. \]
        Next, $d_G(c_i,u) \le d_G(c_i,u') + \rho$, which gives
        \[y = d_G(v,u) \le d_G(v,c_i) + d_G(c_i,u) \le d_G(u,c_i) + d_G(c_i,u') + \rho \le x + \rho. \]
        In the proof of the first inequality, we showed that $d_G(v,u^*)  \ge x - 6\delta$ where $u^* \in V'$ is the vertex realizing $x$.
        Together with the second inequality  $x \ge y - \rho$, this yields the second part of the claim.
    \end{claimproof}

We will now use the additive core-sets precomputed in \Cref{claim:corset-compute} to quickly find eccentricity of a vertex.

\claimEcce*
\begin{claimproof}
        We consider three cases depending on the location of the cluster of the farthest vertex~$u$.
        Then we will output maximum over these three options. 
        Let $i_v,i_u \in \ccal$ denote the clusters containing $v,u$, respectively, and $c_v,c_u$ denote their centers.
        Also, let $t = \loc(v)$. 
        \begin{enumerate}

            \item The node $\loc(u)$ lies on the $t$-to-root path in $T$, inclusively 
            We inspect each $t' \in V(T)$ on this path and each $i \in \chi(t')$, then we output $c_i$ that maximizes $\apxdist(v,c_i)$.
            Since the oracle never underestimates distances, $\apxdist(v,c_i) \ge d_G(v,c_i)$.
            By the triangle inequality, $d_G(v,c_u) \ge d_G(v,u) - \delta$ so the found center $c_i$ satisfies $\apxdist(v,c_i) \ge \ecc_G(v) - \delta$. 
            Next, by \Cref{claim:oracle} the true $(v,c_u)$-distance is no smaller than  $\apxdist(v,c_u) - 2\delta \ge \ecc_G(v) - 3\delta$.
            We perform $\Oh(w\cdot \log n)$ oracle calls, each of cost $\Oh(w\cdot \log n)$.

            \item The node $\loc(u)$ is a descendant of $t$. Then $\loc(u) \in U_{t'}$ where $t'$ is a child of $t$.
            Recall than $T$ is a binary tree so there are at most two options for $t'$; from now on consider $t'$ fixed.
            Note that $u \in V_{t'}$ and $i_v \in \chi(t) = \chi(\parent(t'))$ so we can take advantage of the precomputed $(10\cdot\delta)$-additive core-set $V'_{t'} \sub V_{t'}$ with respect to $\chi(t)$ 
            (\Cref{claim:corset-compute}).
            We output $u' \in V'_{t'}$ that maximizes $\apxdist(c_v,u')$.
            We get that 
            \[\apxdist(c_v,u') \ge d_G(c_v,u') \ge d_G(c_v,u) - 10\cdot\delta \ge d_G(v,u) - 11\cdot\delta, \]
            where the last estimation follows from the triangle inequality.
            Again, the true $(c_v, u')$-distance is no smaller than  $\apxdist(c_v,u') - 2\delta \ge \ecc_G(v) - 13\cdot \delta$.
            The number of necessary oracle calls is proportional to the size of the core-set, which is $\Oh(w^{h-1}\tau^h)$.
            The total running time is then $\Oh\br{(w\cdot\tau)^{h}  \log n}$.

            \item It remains to consider $\loc(u)$ that is neither ancestor nor descendant of $t$.
            Then there exists $\widehat{t} \in V(T)$ so that $u, v$ reside in different subtrees of $\widehat{t}$.
            More precisely, $i_v$ lies in the subtree rooted at $t_1$ and $i_u$ lies in the subtree rooted at $t_2$, where $t_1,t_2$ are distinct children of $\widehat{t}$.
            Consequently $v \in V_{t_1}$ and $u \in V_{t_2}$.
            There are $\Oh(\log n)$ choices for $\widehat{t}$; suppose from now on that it is fixed.

            We use the precomputed $(10\cdot \delta)$-additive core-set $V'_{t_2} \sub V_{t_2}$ with respect to $\chi(\widehat{t})$(\Cref{claim:corset-compute}).
            By applying \Cref{claim:corset-analysis} for $\rho = 10\cdot \delta$,
            there is $u' \in V'_{t_2}$ for which the value of 
            \[x = \min_{i \in \chi(\widehat t)} (\apxdist(v,c_i) +  \apxdist(c_i,u'))\]
            satisfies $|x - \ecc_G(v)| = |x - d_G(v,u)| \le 16\cdot \delta$. 
            Moreover, $u'$ is the vertex from $V'_{t_2}$ that maximizes the above formula.
            In order to find $u'$, we need to
            inspect each $u'' \in V'_{t_2}$ and ask the oracle for all $(u'',c_i)$-distances, $i \in \chi(\widehat t)$.
           This takes time $\Oh(w^2\log n \cdot |V'_{t_2}|) = \Oh\br{w^{h+1}\cdot\tau^{h}  \log n}$.      
            As shown in \Cref{claim:corset-analysis}, the true $(v, u')$-distance is no less than $\ecc_G(v) - 16\cdot\delta$.

            There are $\Oh(\log n)$ choices for $\widehat{t}$ and so the final running time gets multiplied by $\Oh(\log n)$.  
        \end{enumerate}
    \end{claimproof}

\Cref{thm:preprocessing}(\ref{item:thm:ecce}) follows by applying \Cref{claim:ecce} to each vertex.

\section{Proofs of Theorems 3 and 4}
\label{app:main}

In order to design a distance oracle with multiplicative error, we need one auxiliary lemma.

\begin{lemma}\label{lem:oracle-delta}
    Let $H$ be an apex graph.
    Given an edge-weighted $n$-vertex $H$-minor-free graph $G$ and  $\eps, \Delta \in \rr_+$, we can in time $\Oh_H(\eps^{-8}\cdot n\log n)$ 
    build a data structure that supports the following queries.
    For given $u,v \in V(G)$ it either correctly concludes that $d_G(u,v) > \Delta$ or
     outputs a number 
   $x \in [d_G(u,v),\, d_G(u,v) + \eps\Delta/2]$.
    A single query is processed in time $\Oh_H(\eps^{-2}\cdot\log n)$.
\end{lemma}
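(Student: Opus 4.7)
The plan is to build one short-range additive distance oracle per sliding window of consecutive layers, and combine these via Theorem~\ref{thm:preprocessing}(\ref{item:thm:oracle}). The key observation is that by Lemma~\ref{lem:local-tw}, a window of $p = \Oh(1/\eps)$ layers induces a cluster graph of treewidth $\Oh_H(p)$, and when $d_G(u,v) = \Oh(\Delta)$ a shortest $(u,v)$-path is confined to such a window.

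Concretely, I would set $\delta = \eps\Delta/4$ so that the $2\delta$ additive error of Theorem~\ref{thm:preprocessing}(\ref{item:thm:oracle}) matches the allowed slack $\eps\Delta/2$, apply Lemma~\ref{lem:local-tw} to obtain a clustering $\ccal$ and a layering, and set $p = \lceil 2\Delta/\delta\rceil$. For every integer $a$ I would take the window $I_a = [a,\, a + 2p+1)$; by Lemma~\ref{lem:local-tw}(\ref{item:local:interval}) the cluster graph $G^{I_a}_{\ccal}$ has treewidth $\Oh_H(1/\eps)$. Using Lemma~\ref{lem:prelim:twd-compute} I compute a balanced binary tree decomposition of width $\Oh((1/\eps)^2)$ in time $\Oh((1/\eps)^7 \cdot n_a\log n_a)$, where $n_a = |V(G^{I_a})|$, and then invoke Theorem~\ref{thm:preprocessing}(\ref{item:thm:oracle}) on $G^{I_a}$ with the inherited clustering in time $\Oh_h((1/\eps)^2\cdot n_a\log n_a)$. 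Since each vertex lies in $2p+1 = \Oh(1/\eps)$ windows, $\sum_a n_a = \Oh(n/\eps)$, and the total preprocessing time becomes $\Oh_H((1/\eps)^8 \cdot n\log n)$, dominated by the tree-decomposition step.

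For a query $(u,v)$ with $\layer(u)\le \layer(v)$, I would first check the layer gap. If $\layer(v) - \layer(u) > p$, the triangle inequality applied to distances from the shortest-path-tree root (implicit in the layering) yields $d_G(u,v) > (p-1)\delta > \Delta$, so I safely report ``$d_G(u,v) > \Delta$''. Otherwise I pick $a = \max(0,\,\layer(v) - p)$ so that $I_a$ contains the interval $[\layer(v) - p,\, \layer(u) + p]$ from Lemma~\ref{lem:local-tw}(\ref{item:local:distance}), query the $I_a$-oracle to obtain $x \in [d_{G^{I_a}}(u,v),\, d_{G^{I_a}}(u,v) + 2\delta]$ in time $\Oh_H((1/\eps)^2\log n)$, and return $x$ if $x \le (1+\eps/2)\Delta$ while otherwise reporting ``$d_G(u,v)>\Delta$''.

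Correctness hinges on the interplay between $G^{I_a}$ and $G$. Whenever $d_G(u,v) \le p\delta = 2\Delta$, Lemma~\ref{lem:local-tw}(\ref{item:local:distance}) guarantees $d_{G^{I_a}}(u,v) = d_G(u,v)$, so $x$ is automatically a valid $2\delta$-additive approximation of $d_G(u,v)$; and if $d_G(u,v) > 2\Delta$, then $x \ge d_{G^{I_a}}(u,v) > 2\Delta > (1+\eps/2)\Delta$, triggering the correct conclusion. The main subtlety I anticipate is the gray zone $d_G(u,v) \in (\Delta, 2\Delta]$, where either answer is permitted by the specification: inflating $p$ to accommodate distances up to $2\Delta$ (rather than merely $\Delta$) ensures that in this regime the returned $x$ still satisfies $x \in [d_G(u,v),\, d_G(u,v) + \eps\Delta/2]$, so returning $x$ is a valid response even though we make no claim whether $d_G(u,v) \le \Delta$.
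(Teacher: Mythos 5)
Your proof is correct and follows essentially the same approach as the paper: set $\delta=\eps\Delta/4$, invoke \Cref{lem:local-tw} to get a layering with bounded‑treewidth windows, build a balanced tree decomposition per window via \Cref{lem:prelim:twd-compute}, and run the additive oracle of \Cref{thm:preprocessing}(\ref{item:thm:oracle}) on each. The per‑vertex overlap of $\Oh(p)$ windows and the bookkeeping giving $\Oh_H(\eps^{-8} n\log n)$ preprocessing and $\Oh_H(\eps^{-2}\log n)$ query time all match.

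One place where you go further than the paper: the paper picks $p=\lceil 4/\eps\rceil$ and, whenever the layer gap is at most $p$, directly returns the windowed oracle's output. But \Cref{lem:local-tw}(\ref{item:local:distance}) only guarantees $d_{G^I}(u,v)=d_G(u,v)$ under the hypothesis $d_G(u,v)\le p\delta$; if the layer gap is small but $d_G(u,v)>p\delta$, the windowed distance may exceed $d_G(u,v)+\eps\Delta/2$, and the lemma as stated requires the returned number to be in that interval. You anticipate this ``gray zone'' and add the final threshold (return $x$ only if $x\le(1+\eps/2)\Delta$, else report $d_G(u,v)>\Delta$), which cleanly closes this case since $d_G(u,v)\le\Delta$ implies the distance is preserved and hence $x\le(1+\eps/2)\Delta$. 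That check is the right fix; note it already works with the paper's $p=\lceil 4/\eps\rceil$, so your enlargement to $p=\lceil 8/\eps\rceil$ is unnecessary (though harmless, affecting only constants). Minor nit: when the layer gap exceeds $p$, the clean contrapositive of \Cref{lem:local-tw}(\ref{item:local:distance}) gives $d_G(u,v)>p\delta$, which is what you should quote rather than the $(p-1)\delta$ bound — the latter happens to still exceed $\Delta$ for the relevant $\eps$, but it is an off-by-one that would be better to avoid.
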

\begin{proof}
    We apply \Cref{lem:local-tw} for $\delta = \eps\Delta/4$ in time $\Oh(n \log n)$.
    Let $\ccal$ denote the obtained $\delta$-radius clustering of $G$ with layering function $\layer$, and $k$ denote the highest assigned layer number.
    Set $p = \lceil 4/\eps \rceil$. 
    Let ${\cal I}$ be the family of all intervals of length $2p+1$ 
    that intersect the interval $[0,k]$.
    For each $I \in {\cal I}$ we know from \Cref{lem:local-tw}(\ref{item:local:interval}) that $\tw(G^I_\ccal) = \Oh_H(p)$.
    We use \Cref{lem:prelim:twd-compute} to build a balanced tree decomposition of $\tw(G^I_\ccal)$ of width $\Oh_H(p^2)$; this takes time $\Oh_H(|V(G^I)| \cdot p^7\log n)$.
    Such a decomposition
    can be supplied to 
    \Cref{thm:preprocessing}(\ref{item:thm:oracle}):
    in time $\Oh_H(|V(G^I)| \cdot p^2\log n)$ we build a distance oracle that given $u,v \in V(G^I)$ returns their distance in $G^I$ with additive positive error $2\delta = \eps\Delta/2$ in time $\Oh(p^2\log n)$.

    We analyze the running time of setting up all the oracles.
    For each vertex $v \in V(G)$ there are at most $2p+1$ intervals $I \in {\cal I}$ for which $\layer(v) \in I$.
    Hence $\sum_{I \in {\cal I}} |V(G^I)| \le (2p+1)\cdot n$.
    Therefore, the total time needed to build all the oracles is $\Oh_H(p^8\cdot  n\log n)$.

    Suppose now that we are asked for a distance between $u,v \in V(G)$.
    Due to \Cref{lem:local-tw}(\ref{item:local:distance}), if $d_G(u,v) \le \Delta \le p \cdot \delta$ then $|\layer(u) - \layer(v)| \le p$.
    If the latter condition does not hold, we can answer that  $d_G(u,v) > \Delta$.
    Otherwise, \Cref{lem:local-tw}(\ref{item:local:distance}) states that there is $I \in {\cal I}$, depending only on $(\layer(u),\layer(v),p)$ for which the $(u,v)$-distance is the same in $G$ and $G^I$.
    We can therefore use the prepared oracle for $G^I$ to compute $x \in [d_G(u,v),\, d_G(u,v) + \eps\Delta/2]$ in time  $\Oh(p^2\log n)$.
\end{proof}

\thmOracle*
\begin{proof}
    We can assume that $W$ is approximately known due to 2-approximation of diameter.
    Next, we can assume that the smallest distance equals 1 by scaling.
    Let $\ell = \lceil \log_2 W \rceil$.
    For each $i \in [\ell]$ we build the data structure from \Cref{lem:oracle-delta} for $\Delta = 2^{i}$.

    Consider a query $(u,v)$.
    We ask all the $\ell$ oracles for the $(u,v)$-distance and output the smallest answer.
    We claim that this is a good approximation for $d_G(u,v)$.
    Let $i \in [\ell]$ be such that $d_G(u,v) \in [2^{i-1}, 2^i]$.
    Then the oracle for $\Delta = 2^{i}$  returns $x \in [d_G(u,v), \, d_G(u,v) + \eps\cdot 2^{i-1}]$.
    This is a $(1+\eps)$-approximation because $2^{i-1} \le d_G(u,v)$.
    Finally, observe that no oracle can underestimate the $(u,v)$-distance so the smallest answer is guaranteed to lie inside this interval.
\end{proof}

We move on to the last theorem.

\thmApexDiam*
\begin{proof}
    Let $H$ of size $h$ be the apex graph excluded by $\cal H$.
    Because $G$ must be $K_{h+c}$-minor-free, we can use the linear  algorithm for computing single-source distances~\cite{TazariM09}.
    First we compute the eccentricity of an arbitrary vertex $v$ in linear time and set $D_0 = \ecc_G(v)$.
    We know that $D_0 \le \diam(G) \le 2D_0$.
    Let $k = \lceil 1/\eps \rceil$ and $D_0 < D_i < \dots < D_k = 2D_0$ divide the interval $[D_0,\, 2D_0]$ evenly. 

    \begin{claim}\label{claim:apx-step}
        There is an algorithm that, given the input to \Cref{thm:apex}  and $D \in [D_0,\, 2D_0]$, runs in time $\Oh_{h+c}(\eps^{-\alpha}\cdot n\log^2 n)$, where $\alpha = \Oh((h+1)(c+1))$, and correctly reports that either $\diam(G) > D$ or $\diam(G) \le  (1 + \eps)\cdot D$.
    \end{claim}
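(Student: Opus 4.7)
The plan is to combine Lemma~\ref{lem:diam-reduction} (which chops $G-A$ into components of bounded strong diameter) with the apex-minor-free machinery of Sections~\ref{sec:apex-free}--\ref{sec:tree}. First I would dispose of the trivial case $c=0$ by invoking Theorem~\ref{thm:first} directly on $G$, so henceforth assume $c \ge 1$. Then I would invoke Lemma~\ref{lem:diam-reduction} on $(G, A, D)$ with accuracy parameter $\eps/3$ in time $\Oh_{h+c}((1/\eps)^c \cdot n)$; if it reports $\diam(G) > D$ we are done. Otherwise it returns a graph $G'$ with an apex set $A'$ of size $c$ such that $G'-A'$ is $H$-minor-free, $\diam(G') \ge \diam(G)$, if $\diam(G) \le D$ then $\diam(G') \le (1+\eps/3)D$, and each component of $G'-A'$ has strong diameter at most $8(6/\eps)^c\cdot D$.

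Set $\delta = \Theta(\eps D)$ small enough that $16\delta \le \eps D/3$. For each component $G^i$ of $G'-A'$, apply Lemma~\ref{lem:local-tw} to obtain a $\delta$-radius clustering whose cluster graph has treewidth $\Oh_H(\diam(G^i)/\delta) = \Oh_H((1/\eps)^{c+1})$. Combine these into a $\delta$-radius clustering $\ccal'$ of $G'$ by declaring each apex a singleton cluster; since adding $c$ extra vertices increases treewidth by at most $c$, the combined cluster graph $G'_{\ccal'}$ still has treewidth $\Oh_H((1/\eps)^{c+1})$. Apply Lemma~\ref{lem:prelim:twd-compute} to build a balanced binary rooted tree decomposition of $G'_{\ccal'}$ of width $w = \Oh_H((1/\eps)^{2(c+1)})$, depth $\Oh(\log n)$ and size $\Oh(n)$. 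Feed this decomposition, together with $G'$ (which is $K_{h+c}$-minor-free), to Theorem~\ref{thm:preprocessing}(\ref{item:thm:ecce}) to compute every eccentricity in $G'$ within additive error $16\delta \le \eps D/3$.

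The main obstacle is bounding the factor $\tau = \diam(G')/\delta$ that enters the running time of Theorem~\ref{thm:preprocessing}(\ref{item:thm:ecce}), since a priori $\diam(G')$ need not be $\Oh(D)$ even when the individual components of $G'-A'$ are small: removing edges to build $G'$ can only increase distances, and those components themselves may have diameter as large as $(2/\eps)^c D$. The rescue is the very first step of Lemma~\ref{lem:diam-reduction}, which inserts, for every apex $w \in A$ and every $v\in V(G)$, a direct edge of weight $d_G(w,v)$ and terminates whenever any such distance exceeds $D$. Since $c \ge 1$, routing through any apex shows $\diam(G') \le 2D$, so $\tau = \Oh(1/\eps)$.

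Plugging $w = \Oh_H((1/\eps)^{2(c+1)})$ and $\tau = \Oh(1/\eps)$ into the bound $\Oh_{h+c}(w^{h+c+1}\tau^{h+c}\cdot n\log^2 n)$ from Theorem~\ref{thm:preprocessing}(\ref{item:thm:ecce}) gives a running time $\Oh_{h+c}((1/\eps)^{\alpha}\cdot n\log^2 n)$ with $\alpha = \Oh((h+1)(c+1))$, which dominates all earlier steps. Finally, letting $\Delta$ denote the largest computed eccentricity, I would report ``$\diam(G) > D$'' if $\Delta > (1+2\eps/3)D$ (since otherwise $\diam(G') \le (1+\eps/3)D$ would force $\Delta$ below this threshold) and ``$\diam(G) \le (1+\eps)D$'' otherwise, using $\diam(G) \le \diam(G') \le \Delta + \eps D/3$.
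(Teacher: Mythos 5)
Your proof follows essentially the same route as the paper: invoke Lemma~\ref{lem:diam-reduction} with accuracy $\eps/3$, cluster each component of $G'-A'$ via Lemma~\ref{lem:local-tw} with $\delta = \Theta(\eps D)$, add the apices as singleton clusters, compute a balanced tree decomposition via Lemma~\ref{lem:prelim:twd-compute}, feed everything to Theorem~\ref{thm:preprocessing}(\ref{item:thm:ecce}), and threshold the returned estimate at $(1+2\eps/3)D$. You are more careful than the paper in two places. First, you actually justify $\tau = \diam(G')/\delta = \Oh(1/\eps)$ by observing that the apex-to-vertex edges inserted in Lemma~\ref{lem:diam-reduction} (each of weight $\le D$) guarantee $\diam(G')\le 2D$ whenever $c\ge 1$; the paper simply asserts $\tau=\Oh(1/\eps)$, while its only derived bound $\diam(G')\le(1+\eps/3)D$ is conditional on $\diam(G)\le D$, which is not known a priori. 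Second, dispatching $c=0$ via Theorem~\ref{thm:first} is a reasonable simplification (though one can also check that for $c=0$ the sets $Z$ and $F$ are empty, so $G'=G$ and the uniform argument works).

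One discrepancy is worth flagging. When applying Theorem~\ref{thm:preprocessing}(\ref{item:thm:ecce}) you correctly note that $G'$ is only $K_{h+c}$-minor-free, so the exclusion parameter in the bound should be $h+c$, giving $w^{h+c+1}\tau^{h+c}$. The paper instead writes $w^{h+1}\tau^h$. Your reading of Theorem~\ref{thm:preprocessing} is the literal one, but then the resulting exponent is $2(c+1)(h+c+1)+(h+c)=\Theta\bigl((h+c+1)(c+1)\bigr)$, which is \emph{not} $\Oh((h+1)(c+1))$ when $c\gg h$. So your concluding sentence ``$\alpha=\Oh((h+1)(c+1))$'' does not follow from the exponent you actually computed; you would need to either state the weaker bound $\alpha=\Oh((h+c+1)(c+1))$ or argue that the core-set exponent can be kept at $h$ (e.g.\ by observing that after the singleton-apex clusters are forced into every bag, the core-set is only taken over vertices of an $H$-minor-free component). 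As written this is a presentational inconsistency rather than a fatal gap, but it should be resolved.
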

    \begin{claimproof}
        Note that $2D \ge \diam(G)$ so we are allowed to
        apply \Cref{lem:diam-reduction} with $\eps' = \eps/3$.
        If we receive the message that $\diam(G) > D$, we propagate it.
        Otherwise, let $(G', A')$ be the output of the algorithm.
        We have $\diam(G') \ge \diam (G)$ and, assuming $\diam(G) \le D$, also $\diam(G') \le (1 + \frac\eps 3)\cdot D$.
        Let $\delta = \eps D / 48$.
        Each connected component $G^i$ of $G'- A'$ 
        satisfies $\diam(G^i) = \Oh_c(\eps^{-c}\cdot D)$ so  $\diam(G^i) / \delta = \Oh_c(\eps^{-(c+1)})$.
        Next, $G^i$
        is $H$-minor-free and so we can apply \Cref{lem:local-tw} to get a $\delta$-radius clustering $\ccal^i$ of $G^i$ so that the treewidth of the cluster graph $G^i_{\ccal^i}$ is $\Oh_h(\diam(G^i) / \delta) = \Oh_{h+c}(\eps^{-(c+1)})$.
        We combine those to get a $\delta$-radius clustering $\ccal'$ of $G'$ by assigning each $w \in A'$ a singleton cluster.
        Observe that the treewidth of the combined cluster graph grows by at most $|A'| = c$
        with respect to $\max(\tw(G^i_{\ccal^i}))$
        because one can simply insert the additional $c$ cluster-vertices to all the bags. 
        
        We use \Cref{lem:prelim:twd-compute} to compute a tree decomposition $(T,\chi)$ of $G'_{\ccal'}$ of width $w = \Oh_{h+c}(\eps^{-2(c+1)})$, depth $\Oh(\log n)$, and size $\Oh(n)$.
        This takes time $\Oh_{h+c}(\eps^{-7(c+1)}\cdot n\log n)$.
        Note that $\tau := \diam(G')/\delta$ is of order $\Oh(1/\eps)$.
        The  decomposition $(T,\chi)$ can be processed with \Cref{thm:preprocessing}(\ref{item:thm:ecce}): in time $\Oh_{h+c}(w^{h+1}\tau^h\cdot n\log^2 n) =  \Oh_{h+c}(\eps^{-2(c+2)(h+1)}\cdot n\log^2 n)$ we can estimate $\diam(G')$ within additive error $16\cdot \delta = \eps D / 3$.
        
        Let $\Delta$ denote the returned value.
        If $\Delta > (1 + \frac{2\eps}{3})\cdot D$ then $\diam(G') >(1+\frac{\eps}{3})\cdot D$ so by the guarantee of \Cref{lem:diam-reduction} we know that $\diam(G) > D$ and we can terminate the algorithm with such a message.
        Otherwise $\Delta \le (1 + \frac{2\eps}{3})\cdot D$ and the bound $\diam(G) \le (1 + \eps) \cdot D$ follows from $\diam(G) \le \diam(G') \le \Delta + \eps D / 3$.
    \end{claimproof}

    We run the algorithm from \Cref{claim:apx-step} for $D = D_0,D_1,\dots,D_k$ until we receive information that $\diam(G) \le (1+\eps)\cdot D_i$. 
    Observe that we must witness such $i$ at some point because $\diam(G) \le D_k$ so the algorithm cannot always terminate with the first message type. 
    We claim that $D_i$ approximates $\diam(G)$ within factor $(1+\eps)$. 
    
    First suppose $i = 0$.
    By the choice of $D_0$ we know that $\diam(G) \ge D_0$. 
    Then the halting condition $\diam(G) \le (1+\eps)\cdot D_0$ implies that $D_0$ is a good estimation of the diameter.

     Now suppose that $i > 0$.
     Because we have not terminated the algorithm in the previous step,
     we must have learned that $\diam(G) > D_{i-1} \ge (1 - \eps) \cdot D_i$.
     The halting condition $\diam(G) \le (1+\eps)\cdot D_i$ gives the second inequality.

    The number of times we apply \Cref{claim:apx-step} is $k = \lceil 1/\eps \rceil$, which increases the exponent by one.
    The theorem follows.
\end{proof}

\end{document}